\def\BibTeX{{\rm B\kern-.05em{\sc i\kern-.025em b}\kern-.08em
    T\kern-.1667em\lower.7ex\hbox{E}\kern-.125emX}}
\newcommand{\setreviewremovetrue}[1]{\textcolor{red}{\st{#1}}}
\newcommand{\setreviewremovefalse}[1]{\unskip}
\newcommand{\setreviewaddtrue}{\textcolor{blue}}
\newcommand{\setreviewaddfalse}{\textcolor{black}}
\newcommand{\reviewremove}{\ifthenelse{\boolean{review}}
	{\setreviewremovetrue}             
	{\setreviewremovefalse}}
\newcommand{\reviewadd}{\ifthenelse{\boolean{review}}
	{\setreviewaddtrue}
	{\setreviewaddfalse}}
\newtheorem{theorem}{Theorem}
\newtheorem{definition}{Definition}
\newtheorem{lemma}{Lemma}
\tikzset{treenode/.style = {circle, inner sep=0pt, text centered,
		font=\sffamily\bfseries,draw=black,
		fill=white, text width=2em, very thick, align=center}}
\tikzset{channelnode/.style = {circle, inner sep=0pt, text centered,
		font=\sffamily\bfseries,draw=black,
		fill=white, text width=2em, very thick, align=center}}
\tikzset{channelnodelarge/.style = {circle, inner sep=0pt, text centered,
		font=\sffamily\bfseries,draw=black,
		fill=white, text width=3em, very thick, align=center}}
\tikzset{blanknode/.style = {}}
\newcommand{\ProtocolEncryptedInput}{1.1}
\newcommand{\ProtocolEncryptedPlayersRandoms}{1.2}
\newcommand{\ProtocolSequencePermutedRandomValues}{2.1}
\newcommand{\ProtocolSequenceEncryptedInputs}{2.2}
\newcommand{\ProtocolSequenceEncryptedInputsBlindings}{2.3}
\newcommand{\ProtocolServiceProvidersRandom}{2.4}
\newcommand{\ProtocolSequenceHashes}{2.5}
\newcommand{\ProtocolRandomPositionHash}{2.6}
\newcommand{\ProtocolEncryptedRerandomisedBlindedInput}{2.7}
\newcommand{\ProtocolEncryptedBlindedBlinding}{2.8}
\newcommand{\ProtocolEncryptedBlindingBlinding}{2.9}
\newcommand{\ProtocolSequenceRerandomisedPermutedInputs}{2.10}
\begin{document}

\title{Efficient Cloud-based Secret Shuffling via Homomorphic Encryption}

\author{\IEEEauthorblockN{Kilian Becher}
	\IEEEauthorblockA{\textit{Chair of Privacy and Data Security}\\
		\textit{TU Dresden}\\
		Dresden, Germany\\
		kilian.becher@tu-dresden.de}
	\and
	\IEEEauthorblockN{Thorsten Strufe}
	\IEEEauthorblockA{\textit{Chair of IT Security}\\
		\textit{Karlsruhe Institute of Technology}\\
		Karlsruhe, Germany\\
		thorsten.strufe@kit.edu}}

\maketitle
\begin{abstract}
When working with joint collections of confidential data from multiple sources, e.g.\reviewadd{,} in cloud-based multi-party computation scenarios, the\reviewremove{position of a data provider's input in the data collection} \reviewadd{ownership relation between data providers and their inputs} itself is confidential information. Protecting \reviewremove{their} \reviewadd{data providers'} privacy desires a function for secretly shuffling the \reviewremove{inputs in the} data collection. We present \reviewremove{an} \reviewadd{the first} efficient \reviewadd{secure multi-party computation} protocol for secret shuffling \reviewadd{in scenarios with a central server}. \reviewremove{It} \reviewadd{Based on a novel approach to random index distribution, our solution} enables the \reviewremove{re} randomization of the order of a sequence of encrypted data such that no observer can \reviewremove{determine the position of any} \reviewadd{map between} element\reviewadd{s} of the original sequence \reviewremove{in} \reviewadd{and} the shuffled sequence with probability better than guessing. \reviewremove{Our solution is a secure multi-party computation protocol that} \reviewadd{It} allows for shuffling data encrypted under a\reviewadd{n additively} homomorphic cryptosystem with\reviewremove{additive homomorphism. It has} constant round complexity and linear computational complexity\reviewremove{ per participant}. \reviewadd{Being a general-purpose protocol, it is of relevance for a variety of practical use cases.}
\end{abstract}

\begin{IEEEkeywords}
Privacy-preserving computation, secure multi-party computation, homomorphic encryption, secret shuffling
\end{IEEEkeywords}

\section{Introduction}

\reviewadd{In an industrial context, security against semi-honest adversaries~\cite{Lin17} is a valid assumption as companies typically have a financial and legal interest in the correct execution of processes. Proactive misbehaviour or negligent data handling could lead to a loss of reputation or legal consequences, such as those imposed by the European Union's General Data Protection Regulation (GDPR)~\cite{GDPR16}.}

\reviewremove{We consider scenarios where multiple participants each contribute a potentially confidential input to some service provider anonymously. We refer to the data providers as players and require the service provider to be a single, central instance (see Fig. fig:RequirementsCommunicationServiceProviderModel). Anonymity implies that the ownership relation between players and their inputs must not be preserved.}

\reviewremove{This requirement applies to a variety of real-life scenarios, such as anonymous message boards where people send messages to a trusted service provider who publishes those messages anonymously in the sense that no user can identify the author of a post. To prevent any external observer from linking messages to their authors, people could send their messages to the trusted service provider in encrypted form and the service provider could decrypt a batch of messages, change their order, and publish the messages on the message board. Even though no external observer could link people to their messages, the service provider could. Thus, such a construction requires trust in the service provider, that is, the service provider acts as a trusted third party (TTP).}

\reviewremove{In a similar way, one could perform anonymous surveys, polls, and voting. People could send their confidential responses or votes to the trusted service provider in encrypted form, the service provider could decrypt and accumulate the inputs, and publish the results of the survey, poll, and voting, respectively. However, as before, this does not ensure anonymity towards the trusted service provider and, most importantly, does not prevent it from learning the confidential inputs.}

\reviewremove{Another vivid example for centralized computations on confidential inputs is cross-company benchmarking. In such a benchmark, multiple companies each provide key performance indicators (KPI), e.g., return on investment, to some central party. The central party computes statistical measures, e.g., mean and quartiles, based on the companies KPIs and returns these statistical measures to the companies. Given these benchmark results, the companies can compare their performance to the overall performance of their industry. Hence, benchmarking is a valuable measure for companies to determine potential for improvement. However, they have to trust the service provider not to leak their private KPIs to their competitors.}

\reviewremove{In [KER10], the authors present a cryptographic protocol for privacy-preserving benchmarking that does not require a trusted third party. To ensure privacy of the companies, KPIs are processed entirely in encrypted form. Players do not communicate directly with each other but only with the central service provider. 
After running the protocol, each player knows its own KPI, the resulting statistical measures, and what can be inferred from that. The service provider knows the resulting statistical measures and what can be inferred from that but does not learn the players' secret KPIs.
To provide rank-based statistical measures like the median, the benchmarking protocol of [KER10] performs privacy-preserving sorting by obliviously comparing all values to each other, causing quadratic computational and communication complexity.}

\reviewremove{This complexity can be reduced by applying sorting with fewer comparisons but more interaction based on sorting networks as described in [KER10]. However, this potentially leads to participants learning the rank of other participants' inputs in the sorted list (todo: Citation). Despite encryption, the untrusted service provider would learn which player provided the largest value, median value, etc. and could therefore infer information on each player's performance. As a countermeasure, secretly shuffling the data sequence before sorting it can vastly reduce this leakage. As pointed out in [BBS18] for privacy-preserving benchmarking, the effect of the computational complexity in such scenarios is much larger than the effect of the communication complexity. Therefore, we require low computational complexity, potentially at the price of slightly increasing communication overhead.}

\reviewremove{A secret shuffle could easily be performed by a trusted third party. However, this would just shift the trust assumption from the service provider to a trusted third party used for shuffling rather than loosening the overall trust assumption. In this paper, we present an efficient, general-purpose secure multi-party computation protocol MPC for shuffling encrypted inputs based on an additively homomorphic cryptosystem. Key element of our contribution is a novel approach to random index distribution. A secure multi-party computation protocol emulates a trusted party by jointly evaluating a function. Such a protocol is secure in the sense that participants only learn their own inputs, their outputs, and what can be inferred from that[Gol02]. Our protocol is secure against semi-honest adversaries.}

\reviewadd{To make well-informed business decisions, companies need to determine their strengths and weaknesses. One widely-used measure is cross-company benchmarking.
In cross-company benchmarking, companies compare their key performance indicators (KPI), e.g., return on investment, to those of other companies of the same industry. As results, they obtain statistical measures, such as quartiles and mean. To compute rank-based statistical measures like quartiles, sorting KPIs typically is an important aspect of benchmarking. However, as the companies' performances are confidential, no company should learn another company's KPIs. Instead, benchmark results should only help them determine how they perform relatively to their overall peer group.
To ensure that, benchmarks typically are performed by trusted third parties (TTP), neutral companies that take the companies' KPIs in plaintext and centrally compute the statistical measures.
However, using a TTP requires trust. On the one hand, companies need to trust that the TTP does not proactively abuse the companies' private KPIs. As described above, this is a valid assumption as the neutral party has a financial and legal interest in honest behavior. However, on the other hand, they need to trust that the TTP implements sufficient security measures that prevent data breaches.
This is an important drawback of the TTP approach as data breaches might cost companies their competitive advantage or reputation.}

\reviewadd{Alternatively, benchmarking could be performed via secure multi-party computation (MPC)~\cite{Gol02}. An MPC protocol emulates a TTP by having the parties, e.g., companies, jointly evaluate some public function, e.g., quartile computation, over their inputs. Most importantly, those inputs are kept private, e.g., processed in an encrypted form. Such a protocol is secure in the sense that parties only learn their own inputs, their outputs, and what can be inferred from that. Hence, confidential KPIs are protected from any internal and external observer, enabling privacy-preserving benchmarking. We restrict our considerations to MPC scenarios where $n$ parties each contribute confidential inputs and jointly evaluate the target function with a service provider. We refer to the data providers as players and require the service provider to be a single, central instance (see Fig. \ref{fig:RequirementsCommunicationServiceProviderModel}).}

\reviewadd{As the core of a benchmarking MPC protocol, encrypted KPIs need to be sorted according to their underlying plaintexts in a privacy-preserving fashion. This can be done via sorting networks in up to $n \log n$ comparisons orchestrated by a service provider as described in~\cite{KER10}.}
\reviewadd{However, this would cause the service provider to learn the order of the confidential KPIs, that is, how a particular company performs relatively to another particular company. Even if the service provider is assumed to not misuse this information proactively, a data breach could leak this confidential performance information.}

\reviewadd{To reduce the risk of benchmarks leaking confidential data and relative performance information, an efficient privacy-preserving benchmarking protocol based on MPC should ensure anonymity in the sense that no observer can infer ownership relations between companies and their encrypted KPIs. This can be done by secretly shuffling the encrypted KPIs prior to benchmarking.}
We refer to a secret shuffle as a function that randomizes the order of a sequence of encrypted inputs such that no observer can map elements in the original sequence to their corresponding elements in the shuffled sequence with probability better than guessing. Preventing such a mapping also implies a need for changing the ciphertexts without affecting the underlying plaintexts.

Besides privacy-preserving benchmarking\reviewremove{ as well as anonymous message boards, surveys, polls, and voting}, our protocol can be applied to any scenario where $n$ players send encrypted inputs to a central service provider, e.g., a cloud service, without it learning which player provided which input. \reviewadd{This includes use cases such as anonymous surveys, polls, and voting.} Before we present our shuffling protocol in Section~\ref{Protocol}, we introduce required definitions and preliminaries and give an overview of related work. In Sections~\ref{PrivacyProof} and~\ref{CorrectnessProof}, we prove input privacy and correctness, respectively, before we evaluate the complexity and performance of our protocol in Section~\ref{Evaluationn}.%

\section{Preliminaries}\label{Preliminaries}

We restrict our considerations to asymmetric cryptosystems, i.e.\reviewadd{,} a tuple $\mathcal{CS} = (G,E,D)$ consisting of three polynomial-time algorithms. The probabilistic key-generation algorithm $G$ takes as input a security parameter $\kappa$ and outputs a key pair $(pk, sk)$ consisting of a public encryption key $pk$ and a secret decryption key $sk$. The probabilistic encryption algorithm $E$ takes as input a plaintext $x \in \mathcal{M}$ and  $pk$ and outputs the ciphertext $y = E(x,pk) \in \mathcal{C}$. $\mathcal{M}$ and $\mathcal{C}$ denote the plaintext and ciphertext space, respectively. The decryption algorithm $D$ takes as input a ciphertext $y \in \mathcal{C}$ and $sk$ and outputs the plaintext $x = D(y,sk) \in \mathcal{M}$.
For simplification, we denote the encryption of $x \in \mathcal{M}_i$ under a cryptosystem $\mathcal{CS}_i=(G_i,E_i,D_i)$ for $pk_i$ by $y=E_i(x)$\reviewremove{. Similarly, we denote} \reviewadd{and} the decryption of $y \in \mathcal{C}_i$ for $sk_i$ by $x=D_i(y)$.

\begin{figure}[t!]
	\begin{center}
		\includegraphics[width=1.85in]{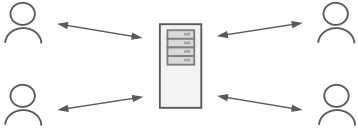}
	\end{center}
	\caption{Network of One Service Provider and Four Players}
	\label{fig:RequirementsCommunicationServiceProviderModel}
\end{figure}

Homomorphic cryptosystems, such as RSA~\cite{RSA78}, Paillier's~\cite{PAI99}, and BGV~\cite{BGV14}, allow for computations on ciphertexts. A cryptosystem $\mathcal{CS}$ is homomorphic if applying an operation $\circ$ to ciphertexts $E(x_1)$ and $E(x_2)$ yields the ciphertext $E(y)$ of the result $y = x_1 \ast x_2$ of a corresponding homomorphic operation $\ast$ applied to the plaintexts $x_1$ and $x_2$~\cite{KER08}. That is, $E(x_1) \circ E(x_2) = E(x_1 \ast x_2)$.
We restrict our considerations to cryptosystems with an additive homomorphism enabling addition of the underlying plaintexts as depicted in~\eqref{Eq:HEMult} and~\eqref{Eq:HEExp}, such as Paillier's cryptosystem~\cite{PAI99}.
\begin{equation}\label{Eq:HEMult}
	D(E(x_1) \cdot E(x_2)) = x_1 + x_2
\end{equation}
\begin{equation}\label{Eq:HEExp}
	D(E(x_1)^{x_2}) = x_1 \cdot x_2
\end{equation}
\reviewadd{That is, multiplication of ciphertexts encrypted under the same key $pk$ yields an encryption of the sum of the underlying plaintexts, encrypted under $pk$. This enables multiplication of an encrypted value by a plaintext value via exponentiation.}%

Paillier's cryptosystem allows for rerandomization~\cite{PAI99}. Given $pk$ and a ciphertext $E(x)$ of a plaintext $x$, rerandomization is an operation that \reviewremove{allows to} compute\reviewadd{s} a valid ciphertext $E^\prime(x)$ without decryption. With high probability, $E(x) \neq E^\prime(x)$ is ensured such that the output distributions of rerandomization and encryption are computationally indistinguishable~\cite{GG17}.
For Paillier's cryptosystem, it can be performed by multiplication with the encrypted identity element $0$ as depicted in~\eqref{Eq:Rerandomisation}~\cite{PAI99}.
\begin{equation}\label{Eq:Rerandomisation}
	E(x) \cdot E(0) = E(x+0) = E^\prime(x)
\end{equation}
A hash function $h(\cdot)$ is a function that, for arbitrarily long inputs $x$, computes outputs $h(x)$ of fixed length~\cite{KAL08}. It is easy to compute $h(x)$, called hash~\cite{KER10}. A hash function is cryptographic if it provides pre-image resistance and collision-resistance. \reviewremove{Pre-image resistance} \reviewadd{The former} guarantees that it is computationally infeasible to compute $x$ given only $h(x)$~\cite{KER10}. \reviewremove{Collision resistance} \reviewadd{The latter} ensures that it is computationally infeasible to find two hashes $h(x) = h(x^\prime)$ such that $x \neq x^\prime$~\cite{KAL08}.

We define a sequence $S$ as an enumeration of \reviewremove{objects} \reviewadd{elements} $s_i$ that are arranged in a particular order.\reviewremove{These $s_i$ are referred to as elements of $S$.} Multiple $s_i$ can have the same value. The number of $s_i$ in $S$ is referred to as its length.
We only use finite sequences of fixed length $n$ and denote them either by $S=(s_1,...,s_n)$ or by $S=(...,s_i,...)$ depending on whether we want to emphasize the elements' order or their form.
Given $S = (s_1,...,s_n)$, a random permutation $\pi : S \rightarrow S^\prime$ is a permutation that is chosen uniformly at random from the set of permutations on sequences of length $n$~\cite{KAL08}. It yields the permuted sequence $S^\prime = (...,s_i,... | s_i \in S)$ containing the same $n$ elements as $S$ but in a randomly permuted order.
We denote the position of \reviewremove{$s_i \in S$} \reviewadd{$s_i$} in $S^\prime$ permuted via $\pi$ by $\pi(s_i)$.

A function $\varphi(m)$ is called negligible in $m$ if for every polynomial $p(m)$ there is an $m_0 \in \mathbb{N}$ such that for any $m>m_0$, $\varphi(m) \leq \frac{1}{p(m)}$ applies~\cite{KAL08}. Let $\left\{X_m^1\right\}_{m \in \mathbb{N}}$ and $\left\{X_m^2\right\}_{m \in \mathbb{N}}$ be two sets of random variables. If for a probabilistic polynomial-time algorithm $A$ the advantage
\begin{equation}\label{eq:PreliminariesIndistinguishability}
\epsilon = \left| Pr\left[ A\left(X_m^1,1^m\right)=1\right] - Pr\left[ A\left(X_m^2,1^m\right)=1\right] \right|
\end{equation}
is negligible in $m$, the two sets are computationally indistinguishable~\cite{KAL08}.

In~\cite{Gro10}, a shuffle of a sequence of ciphertexts is defined as a sequence of different ciphertexts of the same plaintexts, arranged in a permuted order. We additionally require the permutation to be secret and define a secret shuffle as follows.
\begin{definition}[Secret Shuffle]\label{Def:SecureShuffle}
	Given a sequence of ciphertexts $X = (...,E(x_i),...)$ with $1 \leq i \leq n$. A secret shuffle $\mathcal{S}(\cdot)$ is a function that, for input $X$, yields a sequence $\mathcal{X} = (...,E^\prime(x_{\pi(i)}),...)$ such that the ciphertexts $E^\prime(x_1)\neq E(x_1),...,E^\prime(x_n)\neq E(x_n)$ have the same plaintexts $x_1,...,x_n$. The order of the elements in $\mathcal{X}$ is randomly permuted via a random permutation $\pi$. No participant can learn more than negligibly much information about $\pi$.
\end{definition}
\section{Related Work}\label{RelatedWork}

\subsection{Approaches with Additional (Neutral) Instances}\label{RelatedWorkAdditionalInstances}

In~\cite{Cha81}, Chaum introduces mix networks, a protocol that enables anonymity and unlinkability of messages to their senders at the cost of additional computational overhead. Mix networks involve a sequence of servers, called mixes, which receive a set of messages, shuffle\reviewadd{,} and forward them to the next mix~\cite{Gro10}. Unlinkability is guaranteed if at least one mix is honest~\cite{Gro10}. There are two kinds of shuffles: decryption and re-encryption shuffles~\cite{AW07}. In decryption shuffles, the messages are layered ciphertexts. Each mix removes one layer of encryption from each message and sorts the resulting plaintexts. In re-encryption shuffles, the \reviewremove{messages are} \reviewadd{mixes} rerandomize\reviewremove{d} and permute\reviewremove{d} \reviewadd{the messages} via a randomly chosen permutation. \reviewadd{A re-encryption mix network that ensures simplified key management based on universal re-encryption is given in~\cite{GJJS04}.}
In~\cite{Wik05}, the first mix network that is universally composable and efficient independent of the number of mixes is presented. The first efficient non-interactive zero-knowledge proof for proving that a mix shuffled correctly is proposed in~\cite{GL07}. A description of how the permutation used by a mix can be constructed by multiple parties %
is given in~\cite{DLOSS18}.

Unfortunately, mix networks cannot be used in the described scenario to ensure unlinkability between players and their inputs due to several drawbacks. Most importantly, mixes need to be provided by different, independent parties~\cite{PRT12}. This cannot be guaranteed in scenarios with a single, central service provider. \reviewadd{The same applies to Riffle~\cite{KLDF15}, a}\reviewremove{A}n alternative to mix networks\reviewremove{, called Riffle, is presented in [KLDF15]. It ensures sender and receiver anonymity. Similar to mix networks, it requires at least one additional independent server}.

\subsection{Approaches Based on Trusted Hardware}

Alternatively, unlinkability can be achieved by shuffling in trusted hardware, e.g.\reviewadd{,} Intel Software Guard Extensions (SGX)~\cite{SGX16}. Generating and applying the random permutation as well as rerandomization can be done inside trusted hardware on the service\reviewremove{ }\reviewadd{-}provider side. Such an approach is described in~\cite{DYD11} for database access pattern obfuscation. %
In \cite{DDCO17}, an approach with a trusted unit with limited private memory performing shuffling of encrypted data is given. An architecture for privacy-preserving online analysis of client data based on trusted hardware is presented in~\cite{BEMMRLRKTS17}.
In setups with a trusted CPU but no trusted memory, oblivious RAM (ORAM) can ensure that untrustworthy RAM does not leak confidential information~\cite{GO96}. %

However, \reviewremove{such an} \reviewadd{these} approach\reviewadd{es} impl\reviewremove{ies}\reviewadd{y} different trust assumptions \reviewadd{and relations such as trust in the manufacturing of the trusted hardware. Therefore, they are}\reviewremove{and therefore is} not suitable for our scenario with\reviewremove{ mutually} distrusting \reviewremove{companies} \reviewadd{participants}.

\subsection{Approaches Based on Secure Multi-party Computation}

One approach to secure multi-party computation is secret sharing. %
In~\cite{LWZ11}, three shuffling \reviewadd{MPC} protocols \reviewremove{ based on secret sharing} are proposed for the Sharemind secure computation platform, focusing on low communication and round complexity. In Sharemind, computation is done by three independent parties~\cite{BOG13}. This does not fit our scenario with a single, central service provider. Another shuffling protocol based on secret sharing is presented in~\cite{MSZ15}. %
However, it is designed for decentralized settings.

Secure multi-party computation can also be based on homomorphic encryption.
In~\cite{BS06},\reviewremove{ a} \reviewadd{such an MPC} protocol for shuffling data in a setting of multiple data providers and one data miner is proposed. %
It emulates a mix network in the sense that each data provider itself acts as a mix. Hence, it does not require independent mix servers. %
With its quadratic computational and linear round complexity, it does not \reviewremove{meet our requirements} \reviewadd{scale well}.

\reviewadd{A protocol for shuffling based on secret sharing and homomorphic encryption is proposed in~\cite{KT06}. It is used as a subprotocol to anonymizes players' inputs prior to decentralized sorting and benchmarking. The ownership relation is concealed in a multi-round protocol where mix networks are used to ensure anonymity. Hence, it has drawbacks similar to those of mix networks. A constant-round benchmarking protocol for centralized scenarios based on homomorphic encryption is presented in~\cite{KER10}. Instead of sorting the full list of encrypted KPIs, it computes in a privacy-preserving fashion for each input the number of inputs that are smaller, such that no participant learns any KPI's rank. Even though this approach does not require shuffling to prevent leaking KPIs' ranks, it comes at the cost of quadratic computational and communication complexity, which implies poor scalability.}
\section{Secret Shuffling Protocol}\label{Protocol}

\begin{table*}[h!]
	\renewcommand{\arraystretch}{1.4}
	\centering
	\caption{Secret Shuffling Protocol with Step Labels and Computations}
	\label{tab:DesignAdaptedProtocolTable}
	\begin{tabular}[t]{p{.75cm}|p{12cm}}
		\hline \textbf{Step} & \textbf{Computation}\\\hline\hline
		\textbf{\ProtocolEncryptedInput{}} & \underline{$P_i \rightarrow P_S$:} $E_1(x_i)$\\
		
		\textbf{\ProtocolEncryptedPlayersRandoms{}} & $\!E_1(r_{1_i})$\\
		
		\hline
		
		\textbf{\ProtocolSequencePermutedRandomValues{}} & \underline{$P_S \rightarrow P_i$:} $R_1^\prime = (..., E_1(r_{1_i}^\prime) = E_1( r_{1_{\pi_1(i)}}), ...)$\\
		
		\textbf{\ProtocolSequenceEncryptedInputs{}} & $\!X^\prime = (..., E_1(x^\prime_i+r_{2_i}) = E_1(x_{\pi_2(i)}+r_{2_i}) = E_1(x_{\pi_2(i)}) \cdot E_1(r_{2_i}), ...)$\\
		
		\textbf{\ProtocolSequenceEncryptedInputsBlindings{}} & $\!R_2 = (..., E_2(r_{2_i}), ...)$\\
		
		\textbf{\ProtocolServiceProvidersRandom{}} &  $\!r_{{1_S}}$\\
		
		\textbf{\ProtocolSequenceHashes{}} & \underline{$P_i$:} $H = (..., h_j = h(r_{1_j}^\prime||r_{{1_S}}), ...)$\\
		
		\textbf{\ProtocolRandomPositionHash{}} & $\!\rho_i = position(H^\prime = sort(H), h_i)$\\
		
		\textbf{\ProtocolEncryptedRerandomisedBlindedInput{}} & \underline{$P_i \rightarrow P_S$:} $E_1^\prime(x^{\prime}_{\rho_i}+r_{2_{\rho_i}}) = E_1(x^{\prime}_{\rho_i}+r_{2_{\rho_i}}+0) = E_1(x^{\prime}_{\rho_i}+r_{2_{\rho_i}}) \cdot E_1(0)$\\
		
		\textbf{\ProtocolEncryptedBlindedBlinding{}} & $\!E_2(r_{2{\rho_i}}+r_{3_i}) = E_2(r_{2{\rho_i}}) \cdot E_2(r_{3_i})$\\
		
		\textbf{\ProtocolEncryptedBlindingBlinding{}} & $\!E_1(r_{3_i})$\\
		
		\textbf{\ProtocolSequenceRerandomisedPermutedInputs{}} & \underline{$P_S$:} $\mathcal{X} = (..., E_1^\prime(x^\prime_{\rho_i}) = E_1^\prime(x^{\prime}_{\rho_i}+r_{2_{\rho_i}}) \cdot E_1((-1) \cdot D_2(E_2(r_{2{\rho_i}}+r_{3_i}))) \cdot E_1(r_{3_i}), ...)$\\
		
		\hline
	\end{tabular}
\end{table*}

\subsection{Adversary Model}\label{ProtocolAdversaryModel}

We design our protocol to be secure against any semi-honest adversary $\mathcal{A}$~\cite{Lin17} that corrupts either an arbitrary number of players or the service provider. That is, we exclude collusion between any player and the service provider, like the related work. Our shuffling protocol ensures input privacy. Hence, $\mathcal{A}$ does not learn anything about non-corrupted players' secret inputs. Most importantly, we ensure that no such $\mathcal{A}$ is able to map non-corrupted players' inputs to their equivalents in the shuffled sequence generated by the shuffling protocol. In summary, no adversary corrupting either any subset of the players or the service provider can determine the ownership relation between non-corrupted players or their secret inputs.

\subsection{Prerequisites}\label{ProtocolPrerequisites}

In the description of our protocol SHUFFLE, we use the indices $1 \leq i,j \leq n$ for players $P_i$ and $P_j$, respectively, as well as their inputs $x_i$ ($x_j$), random values $r_i$ ($r_j$), etc. We denote concatenation by ``$||$''.

We assume two instances $\mathcal{CS}_1$ and $\mathcal{CS}_2$ of the DamgÃ¥rd-Jurik cryptosystem~\cite{DAJ01}, like Paillier's~\cite{PAI99}.
The public keys $pk_1$ and $pk_2$ are known to the service provider $P_S$ and the players $P_i$. The secret key $sk_1$ is known only to the players and could be generated and distributed via Diffie-Hellman key exchange~\cite{DH76}. The secret key $sk_2$ is only known to $P_S$.
We require the plaintext space $\mathcal{M}_2$ of $\mathcal{CS}_2$ to be a subset of the plaintext space $\mathcal{M}_1$ of $\mathcal{CS}_1$, i.e.,
\begin{equation}\label{Eq:Ciphertextspace1EqualsPlaintextspace2}
	\mathcal{M}_2 \subseteq \mathcal{M}_1.
\end{equation}
This ensures that any message that can be encrypted with $pk_2$ can also be encrypted with $pk_1$.

We require two random permutations $\pi_1$ and $\pi_2$, a cryptographic hash function $h(\cdot)$, and two functions $sort(S)$ and $position(S,s_i)$. The permutations $\pi_1$ and $\pi_2$ are both chosen by and only known to $P_S$. The hashes of $h(\cdot)$ are assumed to be uniformly distributed among the domain $dom(h(\cdot))$. Given a sequence $S = (s_1,...,s_n)$, $sort(S)$ outputs a sequence $S^\prime$ that contains $s_1,...,s_n$ in ascending order, as in~\eqref{Eq:SortedSequenceElementsS} and~\eqref{Eq:SortedSequenceElementsSPrime}.
\begin{equation}\label{Eq:SortedSequenceElementsS}
	S^\prime = (...,s_i,... | s_i \in S)
\end{equation}
\begin{equation}\label{Eq:SortedSequenceElementsSPrime}
	S^\prime = (s^\prime_1 \leq s^\prime_2 \leq \cdots \leq s^\prime_n)
\end{equation}
The function $position(S,s_i)$ outputs the position of $s_i$ in $S$.

Moreover, we assume pairwise secure, i.e., secret and authentic, channels between each player and the service provider, for instance established via Transport Layer Security (TLS).

\subsection{Protocol Specification}\label{ProtocolSpecification}

According to Definition~\ref{Def:SecureShuffle}, for a protocol to secretly shuffle a sequence, it has to permute the order of the entries by a random permutation $\pi$. Furthermore, it has to change the ciphertexts of the secret inputs such that $\pi$ cannot be reconstructed. To achieve this, each player performs two main tasks: randomly but uniquely selecting some player's encrypted input and rerandomizing (see Equation~\eqref{Eq:Rerandomisation}) this input. The former is based on a novel approach to random index distribution. For this random index distribution, each player provides a random input, which is concatenated with a random value given by the service provider. The resulting concatenations are then hashed and the hashes are sorted. The position of the hash in the sorted list of hashes corresponding to a player's random input is its random index. Our protocol runs in two communication rounds. The first round is used for collecting the players' inputs and the second round conducts the actual shuffling. It is depicted in Table~\ref{tab:DesignAdaptedProtocolTable} and explained in Section~\ref{ProtocolExplanation}.

\subsection{Protocol Explanation}\label{ProtocolExplanation}

In step~\ProtocolEncryptedInput{}, each player sends its private input $x_i$ that is supposed to be shuffled, encrypted under $\mathcal{CS}_1$. Then, in step~\ProtocolEncryptedPlayersRandoms{}, each player chooses a (presumably unique) random value $r_{1_i} \in \mathcal{M}_1$ and sends it to $P_S$, encrypted under $\mathcal{CS}_1$. This random value will be used for random index distribution. Hence, the service provider receives two list of $n$ ciphertexts.%

The service provider then, in step~\ProtocolSequencePermutedRandomValues{}, sends the full list $R^\prime_1$ of encrypted random values $E_1(r_{1_i})$ to the players. Permutation via $\pi_1$ prevents the players from learning which $r_{1_i}$ was provided by which player. Similarly, in step~\ProtocolSequenceEncryptedInputs{}, it sends the full list of encrypted input values $E_1(x_i)$, permuted via $\pi_2$. To prevent the players from learning the secret inputs, each plaintext $x_{\pi_2(i)}$ is blinded by a value $r_{2_i} \in \mathcal{M}_1$, chosen individually and at random for each $i$ by the service provider. %
The full list of random values $r_{2_i}$, encrypted under $\mathcal{CS}_2$, is sent to the players in step~\ProtocolSequenceEncryptedInputsBlindings{}. Then, $P_S$ chooses one long random value $r_{{1_S}}$, e.g., $r_{{1_S}} \in \mathcal{M}_1$, and sends it to the players in step~\ProtocolServiceProvidersRandom{}.
Hence, the players receive the same three lists of $n$ ciphertexts and the same random value.

In step~\ProtocolSequenceHashes{}, each $P_i$ decrypts the ciphertexts $E_1(r_{1_j}^\prime) \in R^\prime_1$, $1 \leq j \leq n$. If the values $r_{1_j}^\prime$ are not unique, the players abort the protocol. Otherwise, each player concatenates each resulting plaintext $r_{1_j}^\prime$ with the random value $r_{{1_S}}$ of $P_S$ and computes the $n$ hashes $h_j$. Using $r_{{1_S}}$ as a seed of the hash function prevents any player $P_i$ from selecting a specific $r_{1_i}$ in step~\ProtocolEncryptedPlayersRandoms{} to obtain a desired hash $h_i$, which would eventually affect the (random) index distribution.
In step~\ProtocolRandomPositionHash{}, each $P_i$ sorts the list of hashes. For the hash $h_i=h(r_i||r_{{P_S}})$ corresponding to player $P_i$'s random value $r_{1_i}$, the position $\rho_i$ in the sorted list of hashes is the random index of $P_i$. Hence, each player computes an individual, random index $\rho_i$ that is unknown to $P_S$ and not related to the rank of its input $x_i$.

Given $\rho_i$, each player sends the ciphertext $E_1(x^{\prime}_{\rho_i}+r_{2_{\rho_i}})$ to $P_S$ in step~\ProtocolEncryptedRerandomisedBlindedInput{}. To prevent the service provider from learning $\rho_i$, this ciphertext is rerandomized. Additionally, in step~\ProtocolEncryptedBlindedBlinding{}, the encrypted random value of index $\rho_i$ in $R_2$, i.e., $E_2(r_{2_i})$, is sent to $P_S$. The underlying plaintext $r_{2_i}$ is blinded by a random value $r_{3_i}$. This random value $r_{3_i}$, encrypted under $\mathcal{CS}_1$, is then sent to $P_S$ in step~\ProtocolEncryptedBlindingBlinding{}. Hence, the service provider receives three ciphertexts from each player.

In step~\ProtocolSequenceRerandomisedPermutedInputs{}, the service provider decrypts the ciphertexts $E_2(r_{2{\rho_i}} + r_{3_i})$ received in step~\ProtocolEncryptedBlindedBlinding{}, multiplies the resulting plaintexts with $-1$, and encrypts the products under cryptosystem $\mathcal{CS}_1$. The resulting ciphertexts are multiplied with the ciphertexts $E_1(x^{\prime}_{\rho_i}+r_{2_{\rho_i}})$ of step~\ProtocolEncryptedRerandomisedBlindedInput{} and $E_1(r_{3_i})$ of step~\ProtocolEncryptedBlindingBlinding{}. Consequently, the random values $r_{2_i}$ and $r_{3_i}$ are eliminated, resulting in rerandomized ciphertexts $\chi_i = E_1(x^{\prime}_{\rho_i})$.

The order of the rerandomized ciphertexts $\chi_i$ of the input values $x_i$ is determined by the input order of the values in steps~\ProtocolEncryptedRerandomisedBlindedInput{} to~\ProtocolEncryptedBlindingBlinding{} as received via network. Every $P_i$ sends some $P_j$'s rerandomized, encrypted input, chosen based on its random index. The service provider cannot map between the original input order and the order of $\mathcal{X}$. %
Therefore, $P_S$'s output is a shuffled list. The players do not get an output.
\section{Proof of Input Privacy}\label{PrivacyProof}

We denote privacy by a tuple $(a,b)$, stating that $a$ players or (exclusively) $b$ service providers can be corrupted without input privacy being at risk. We prove that the players' inputs in the protocol SHUFFLE are $(t,1)$-private against semi-honest adversaries $\mathcal{A}$. This is formalized as follows.

\begin{theorem}[Input Privacy]\label{Thrm:InputPrivacy}
	The protocol SHUFFLE $(t,1)$-privately computes the shuffled sequence $\mathcal{X} = (E_{\reviewadd{1}}^\prime(x_{\pi(1)}),...,E_{\reviewadd{1}}^\prime(x_{\pi(n)}))$ from the input sequence $(x_1,...,x_n)$ in the semi-honest model as long as there is no collusion between any player and the service provider.
\end{theorem}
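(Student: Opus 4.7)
The plan is to prove Theorem~\ref{Thrm:InputPrivacy} via the standard simulation paradigm for semi-honest security. Because collusion between any player and the service provider is excluded, I treat two disjoint adversary cases: (i) a coalition of at most $t<n$ corrupt players with $P_S$ honest, and (ii) a corrupt $P_S$ with all players honest. In each case I construct a probabilistic polynomial-time simulator that, given only the inputs and outputs of the corrupted parties, produces a view computationally indistinguishable from the real protocol view in the sense of Equation~\eqref{eq:PreliminariesIndistinguishability}. The overall bound $\epsilon$ is assembled by a standard hybrid argument that replaces one step's messages at a time.

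For case (i), the corrupted players have no protocol output, so the simulator only needs to reproduce what they receive in steps~\ProtocolSequencePermutedRandomValues{}--\ProtocolServiceProvidersRandom{}. It samples dummy plaintexts $\tilde{x}_j$ and $\tilde{r}_{1_j}$ for every honest $P_j$, draws fresh uniform blindings $\tilde{r}_{2_k}\in\mathcal{M}_1$ together with a uniform $\tilde{r}_{1_S}\in\mathcal{M}_1$, and assembles $R_1'$, $X'$, and $R_2$ using arbitrary permutations. Indistinguishability rests on two facts: the uniform additive blinding $r_{2_i}\in\mathcal{M}_1$ applied in step~\ProtocolSequenceEncryptedInputs{} information-theoretically masks every $x_{\pi_2(i)}$ inside the additive group $\mathcal{M}_1$, and semantic security of $\mathcal{CS}_1$ and $\mathcal{CS}_2$ hides the remaining payloads. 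The corrupted players do learn permuted decryptions of the $r_{1_j}$, but this reveals no ownership link because the entries of $X'$ are blinded and independently permuted by $\pi_2$.

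For case (ii), the simulator receives $P_S$'s view through step~\ProtocolServiceProvidersRandom{} (including $\pi_1$, $\pi_2$, all $r_{2_k}$, and $r_{1_S}$) together with the intended shuffled output $\mathcal{X}$. It samples a uniformly random permutation $\sigma$ on $\{1,\ldots,n\}$ and, playing each honest $P_i$, builds the step-\ProtocolEncryptedRerandomisedBlindedInput{}--\ProtocolEncryptedBlindingBlinding{} triple by (a) freshly rerandomizing the position-$\sigma(i)$ entry of $X'$, (b) drawing $\tilde{r}_{3_i}$ uniformly and outputting $E_2(r_{2_{\sigma(i)}}+\tilde{r}_{3_i})$, and (c) outputting $E_1(\tilde{r}_{3_i})$. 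Indistinguishability relies on four ingredients: the hashes used to derive the real indices $\rho_i$ are uniform on $dom(h(\cdot))$ by assumption, and $r_{1_S}$ is broadcast only after the $r_{1_i}$ have been sealed under $pk_1$, so from $P_S$'s view the map $i\mapsto\rho_i$ is statistically close to a uniform permutation; the rerandomization in step~\ProtocolEncryptedRerandomisedBlindedInput{} is computationally indistinguishable from a fresh encryption; the uniform $r_{3_i}$ information-theoretically masks $r_{2_{\rho_i}}$ inside $\mathcal{CS}_2$; and semantic security of $\mathcal{CS}_1$ hides $r_{3_i}$ in step~\ProtocolEncryptedBlindingBlinding{}.

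The main obstacle is case (ii). Its delicate point is showing that $P_S$ cannot correlate the ciphertext it dispatched at position $k$ of $X'$ with the player that eventually rerandomizes and returns it, even though $P_S$ knows $\pi_2$ and every $r_{2_k}$. This reduces to proving that, conditioned on $P_S$'s view, the index tuple $(\rho_1,\ldots,\rho_n)$ is jointly uniform over permutations of $\{1,\ldots,n\}$, which uses both semantic security of $\mathcal{CS}_1$ hiding the $r_{1_i}$ from $P_S$ and the uniform-output property of $h(\cdot)$ on the seeded inputs $r_{1_j}'||r_{1_S}$. Chaining the per-message indistinguishabilities via a hybrid argument then yields $\epsilon$ negligible in the security parameter $\kappa$, completing the proof of $(t,1)$-privacy.
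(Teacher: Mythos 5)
Your proposal is correct and follows essentially the same route as the paper: a simulation-based argument in the semi-honest model, split into the two exclusive corruption cases (a coalition of $t$ players versus the service provider), with indistinguishability resting on the uniform additive blindings and the semantic security of the two cryptosystems. The only notable difference is one of care rather than approach: you simulate a jointly consistent transcript for the corrupted service provider (sampling $\sigma$ and correlating the step-\ProtocolEncryptedRerandomisedBlindedInput{}--\ProtocolEncryptedBlindingBlinding{} triple) and explicitly argue uniformity of the index tuple $(\rho_1,\ldots,\rho_n)$, whereas the paper simulates each received message independently as a uniform element of the appropriate plaintext or ciphertext space and defers the analysis of the indices to its correctness proof.
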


First, we define the view of a participant as follows~\cite{Gol02}.

\begin{definition}[View]\label{Def:View}
	A participant $P_i$'s view $V_i(x_1,...,x_n) = \{x_i,r_i,m_{i_1},...,m_{i_\phi}\}$ in the execution of a protocol $\Pi$ on inputs $(x_1,...,x_n)$ contains $P_i$'s input $x_i$, $P_i$'s internal random tape $r_i$, and any message $m_{i_k}$ that $P_i$ receives during execution of $\Pi$.
\end{definition}
  
For a secure computation protocol to be secure in the semi-honest model, it is sufficient to prove that anything an adversary $\mathcal{A}$ can learn during protocol execution can as well be learned given only the inputs and outputs of the protocol~\cite{Lin17}.
That is, it is sufficient to show that the view of $\mathcal{A}$ can be generated by some polynomial-time algorithm $\mathcal{S}$, called simulator, entirely based on the inputs and outputs of the $t$ corrupted players or the exclusively corrupted service provider. This can be formalized as follows~\cite{Gol02}.

\begin{definition}[Functionality, Simulator, Privacy]\label{def:PrivacySimulator}
	Let $f(x_1,...,x_n):(\{0,1\}^\ast)^n \rightarrowtail (\{0,1\}^\ast)^n$ be the shuffling functionality. For $I=\{i_1,...,i_t\} \subset \{1,...,n\}$ let $V_I(x_1,...,x_n)=(I,V_{i_1}(x_1,...,x_n),...,V_{i_t}(x_1,...,x_n))$. The protocol SHUFFLE $(t,1)$-privately computes $f(x_1,...,x_n)$ if there exists a polynomial-time simulator $\mathcal{S}$ that, given the corrupted participants' inputs and output, generates an output that is computationally indistinguishable from $V_I(x_1,...,x_n)$ for any $I$, i.e.\reviewadd{,} $\mathcal{S}(I,(x_{i_1},...,x_{i_t}),f(x_1,...,x_n)) \stackrel{c}{\equiv} V_I(x_1,...,x_n)$.
\end{definition}

\paragraph{Proof Outline}\label{PrivacyProofOutline}

Our protocol has two different kinds of participants: $n$ players with an input but no output and one service provider with no input but an output. Hence, we have two different views that need to be simulated by two different simulators. They simulate the protocol inputs by taking the inputs from the real protocol execution and simulate the coin tosses by using the same pseudo-random generator (PRG) as in the real protocol execution. This results in a simplified view that only contains the messages $m_{i_k}$, which the corrupted participants receive.
We prove our protocol to $1$-privately compute the shuffling functionality $f$ in case an adversary $\mathcal{A}$ corrupts only the service provider. Additionally, we prove that the protocol SHUFFLE $t$-privately computes $f$ in case an adversary $\mathcal{A}$ corrupts $t$ players but not the service provider. This leads to the two Lemmas~\ref{Lem:PrivacyPlayers} and~\ref{Lem:PrivacyServiceProvider}.

\begin{lemma}[Input Privacy -- Players]\label{Lem:PrivacyPlayers}
	The protocol SHUFFLE $t$-privately computes the shuffled sequence $\mathcal{X} = (E_{\reviewadd{1}}^\prime(x_{\pi(1)}),...,E_{\reviewadd{1}}^\prime(x_{\pi(n)}))$ from the input sequence $(x_1,...,x_n)$ for semi-honest adversaries that corrupt $t$ players but not the service provider.
\end{lemma}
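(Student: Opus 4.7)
The plan is to construct a polynomial-time simulator $\mathcal{S}$ that, given the corrupted coalition $I \subset \{1,\ldots,n\}$ with $|I|=t$ and their inputs $(x_i)_{i \in I}$, outputs a transcript computationally indistinguishable from the joint view $V_I(x_1,\ldots,x_n)$. Following the proof outline, the corrupted players' own inputs and internal coins are supplied via a shared PRG, so after collapsing these the nontrivial content to simulate is exactly the four broadcast objects received in the second round: the permuted ciphertext list $R_1'$ of step~\ProtocolSequencePermutedRandomValues{}, the permuted blinded-input list $X'$ of step~\ProtocolSequenceEncryptedInputs{}, the encrypted blinding list $R_2$ of step~\ProtocolSequenceEncryptedInputsBlindings{}, and the value $r_{1_S}$ of step~\ProtocolServiceProvidersRandom{}. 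All messages that the corrupted players themselves produce in steps~\ProtocolEncryptedInput{}, \ProtocolEncryptedPlayersRandoms{}, and \ProtocolEncryptedRerandomisedBlindedInput{}--\ProtocolEncryptedBlindingBlinding{} are deterministic functions of their local state together with these four objects, so they require no separate simulation.

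The simulator $\mathcal{S}$ would proceed as follows. It uses the corrupted players' actual $r_{1_i}$ drawn from their simulated tapes, samples fresh uniform $\tilde{r}_{1_j} \in \mathcal{M}_1$ for every $j \notin I$, picks a uniform permutation $\tilde{\pi}_1$ on $[n]$, and emits $\tilde{R}_1' = (E_1(\tilde{r}_{1_{\tilde{\pi}_1(i)}}))_i$ with fresh encryption randomness. For $\tilde{X}'$ it samples $\tilde{u}_1,\ldots,\tilde{u}_n \in \mathcal{M}_1$ uniformly and outputs $(E_1(\tilde{u}_i))_i$. For $\tilde{R}_2$ it outputs $(E_2(0))_i$. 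For $\tilde{r}_{1_S}$ it draws a uniform value of the appropriate length.

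Computational indistinguishability is then established by a short hybrid argument. In the first hop, I replace each $E_2(r_{2_i})$ inside $R_2$ by $E_2(0)$ while leaving the real $r_{2_i}$'s alive inside $X'$; an $n$-step sub-hybrid over the positions of $R_2$ reduces distinguishing to semantic security of $\mathcal{CS}_2$, which the coalition cannot break because it does not hold $sk_2$. In the second hop, each plaintext $x_{\pi_2(i)} + r_{2_i}$ underlying $X'$ is replaced by a fresh uniform $\tilde{u}_i$: after the first hop, the value $r_{2_i}$ is only used as a one-time pad in $X'$ and, because $\mathcal{M}_1$ is an additive group with $r_{2_i}$ uniform on it, this hop is distribution-identical. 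In the third hop, the real $\pi_1$ and real non-corrupted $r_{1_j}$'s are replaced by $\tilde{\pi}_1$ and $\tilde{r}_{1_j}$'s, which are drawn from the very same uniform distributions, so this hop is again distribution-identical; $r_{1_S}$ is already identically distributed in both worlds.

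The main obstacle is the semantic-security reduction for $R_2$, because $R_2$ and $X'$ are functionally entangled through the shared $r_{2_i}$. The plan is to let the reduction itself choose each $r_{2_i}$ in the clear (so that it can faithfully build the plaintexts underlying $X'$) and query the $\mathcal{CS}_2$-challenger only on the pair $(r_{2_i},0)$ to obtain the $i$-th entry of $R_2$. This is sound because the plaintext $r_{2_i}$ never enters the coalition's view through any other channel: the messages of step~\ProtocolEncryptedBlindedBlinding{} travel via the secret channels to $P_S$, not between players, so the reduction can complete a consistent simulation of the whole protocol using only its choice of $r_{2_i}$ and the single ciphertext it receives from the oracle. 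A secondary technicality is the abort in step~\ProtocolSequenceHashes{} triggered by a collision among the $r_{1_j}^\prime$, whose probability is negligible in $\log|\mathcal{M}_1|$ and is folded into the final indistinguishability bound.
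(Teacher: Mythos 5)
Your proposal is correct and follows the same overall strategy as the paper's proof: simulate the four second-round messages of steps~\ProtocolSequencePermutedRandomValues{}--\ProtocolServiceProvidersRandom{} (the only messages players receive), arguing that $R_1^\prime$ and $r_{1_S}$ are identically distributed to fresh uniform samples, that $X^\prime$ is uniform because the $r_{2_i}$ act as one-time pads, and that $R_2$ is handled by the semantic security of $\mathcal{CS}_2$. Where you go beyond the paper is in rigor rather than in route. First, the paper treats the four messages independently and never addresses the fact that $X^\prime$ and $R_2$ are entangled through the shared $r_{2_i}$; your hybrid ordering --- first replace $E_2(r_{2_i})$ by $E_2(0)$ via an IND-CPA reduction that picks $r_{2_i}$ itself, and only then invoke the one-time-pad argument on $X^\prime$ --- is exactly what is needed to make the one-time-pad step sound, since before the first hop $r_{2_i}$ is not used only once. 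Second, your simulator embeds the corrupted coalition's own $r_{1_i}$ into $\tilde{R}_1^\prime$ under a fresh permutation, whereas the paper's simulator emits $n$ fresh uniform values; since corrupted players hold $sk_1$ and know their own $r_{1_i}$, they can check for their presence in the decrypted list, so the paper's simulated $R_1^\prime$ as literally stated would be distinguishable and your version is the correct one. Your explicit treatment of the abort on colliding $r_{1_j}^\prime$ is likewise a detail the paper omits. In short: same decomposition and same cryptographic ingredients, but your hybrid structure closes two small gaps in the paper's argument.
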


\begin{proof} The proof of Lemma~\ref{Lem:PrivacyPlayers} gives the players' view and simulator. Then, the computational indistinguishability of the view and the simulator's output is shown.
	
	Each player $P_i$ provides as input a secret value $x_i$ and does not get an output. The players have the secret decryption key $sk_1$ and can decrypt any ciphertext $c_j = E_1(x_j)$. An arrow ``$\rightarrow$'' shows the corresponding plaintexts that the players can compute given $sk_1$.
	Each $P_i$ receives the following messages in the respective protocol steps.
	\begin{enumerate}
		\renewcommand{\labelenumi}{\textbf{\theenumi}}
		\renewcommand{\theenumi}{2.\arabic{enumi}}
		
		\item $E_1(r_{\reviewadd{1_1}}^\prime),...,E_1(r_{\reviewadd{1_n}}^\prime) \rightarrow r_{\reviewadd{1_1}}^\prime,...,r_{\reviewadd{1_n}}^\prime$
		
		\item $E_1(x_1^\prime+r_{2_1}),...,E_1(x_n^\prime+r_{2_n}) \rightarrow x_1^\prime+r_{2_1},...,x_n^\prime+r_{2_n}$
		
		\item $E_2(r_{2_1}),...,E_2(r_{2_n})$
		
		\item $r_{1_S}$
	\end{enumerate}
	
	If a message can be decrypted, the players' simulator $\mathcal{S}_{P_i}$ simulates the underlying plaintext. 
	Encryption can be regarded as a deterministic mapping of probability distributions~\cite{KER10}. Hence, if the computationally indistinguishable simulation of a plaintext is possible, so is the computationally indistinguishable simulation of the corresponding ciphertext.
	Given that $dom(\cdot)$ denotes the domain of a function, the players' simulator $\mathcal{S}_{P_i}$ generates the following simulated messages.
	\begin{enumerate}
		\renewcommand{\labelenumi}{\textbf{\theenumi}}
		\renewcommand{\theenumi}{2.\arabic{enumi}}
		
		\item $n$ random values $r_{\ProtocolSequencePermutedRandomValues{}_1},...,r_{\ProtocolSequencePermutedRandomValues{}_n}$, uniformly chosen from $dom(D_1(\cdot)) = \mathcal{M}_1$
		
		\item $n$ random values $r_{\ProtocolSequenceEncryptedInputs{}_1},...,r_{\ProtocolSequenceEncryptedInputs{}_n}$, uniformly chosen from $dom(D_1(\cdot)) = \mathcal{M}_1$
		
		\item $n$ random values $r_{\ProtocolSequenceEncryptedInputsBlindings{}_1},...,r_{\ProtocolSequenceEncryptedInputsBlindings{}_n}$, uniformly chosen from $dom(E_2(\cdot)) = \mathcal{C}_2$
		
		\item A random value $r_{\ProtocolServiceProvidersRandom{}}$, uniformly chosen from $dom(D_1(\cdot)) = \mathcal{M}_1$
	\end{enumerate}
	
	We show that the simulator's output and the players' view are computationally indistinguishable. To prove computational indistinguishability of a real and a simulated message, one needs to show that the probability distribution of the real message is known to the simulator and that the function generating the simulated output is identically distributed~\cite{KER10}.
	\\
	In step~\ProtocolSequencePermutedRandomValues{}, the values sent are the random values that were uniformly chosen by the players in step~\ProtocolEncryptedPlayersRandoms{} from $dom(D_1(\cdot))=\mathcal{M}_1$ and encrypted with $pk_1$. The players can decrypt these messages, resulting in the original random values. The values generated by the simulator, which are chosen uniformly at random from \reviewremove{$dom(\mathcal{M}_1) = dom(D_1(\cdot))$} \reviewadd{$\mathcal{M}_1 = dom(D_1(\cdot))$}, are identically distributed.
	\\
	In step~\ProtocolSequenceEncryptedInputs{}, the values sent are sums of the players' secret inputs and random values that were uniformly chosen by the service provider from $\mathcal{M}_1 = dom(D_1(\cdot))$ and encrypted with $pk_1$. The players can decrypt these messages, resulting in the original sums with one summand being a uniformly chosen random number. Hence, they are identically distributed to the simulator's output, which are values chosen uniformly at random from \reviewremove{$dom(\mathcal{M}_1) = dom(D_1(\cdot))$} \reviewadd{$\mathcal{M}_1 = dom(D_1(\cdot))$}.
	\\
	The message of step~\ProtocolSequenceEncryptedInputsBlindings{} cannot be decrypted by $P_i$. The DamgÃ¥rdâ€“Jurik cryptosystem ensures semantic security~\cite{DAJ01}. Hence, ciphertexts are computationally indistinguishable from values that are chosen uniformly at random from the ciphertext space. The real messages in $dom(E_2(\cdot)) = \mathcal{C}_2$ are computationally indistinguishable from the simulator's outputs, chosen uniformly at random from $dom(E_2(\cdot)) = \mathcal{C}_2$.
	\\
	In step~\ProtocolServiceProvidersRandom{}, the real message is a value that was chosen uniformly at random from $dom(D_1(\cdot)) = \mathcal{M}_1$. The simulator's output, chosen uniformly at random from $dom(D_1(\cdot)) = \mathcal{M}_1$, and the real message are identically distributed.
	\\
	Given these comparisons, the described simulator for the players generates an output that is computationally indistinguishable from real views.
	\\
	This completes the proof of Lemma~\ref{Lem:PrivacyPlayers}.%
\end{proof}

\begin{lemma}[Input Privacy -- Service Provider]\label{Lem:PrivacyServiceProvider}
	The protocol SHUFFLE $1$-privately computes the shuffled sequence $\mathcal{X} = (E_{\reviewadd{1}}^\prime(x_{\pi(1)}),...,E_{\reviewadd{1}}^\prime(x_{\pi(n)}))$ from the input sequence $(x_1,...,x_n)$ for semi-honest adversaries that corrupt the service provider.
\end{lemma}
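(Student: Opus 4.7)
My plan is to mirror the template of the proof of Lemma~\ref{Lem:PrivacyPlayers}: construct a polynomial-time simulator $\mathcal{S}_{P_S}$ which, given the service provider's (empty) input and its output $\mathcal{X}$, generates an output computationally indistinguishable from the view $V_{P_S}$, and then compare real and simulated messages one at a time. Beyond its random tape --- containing $\pi_1$, $\pi_2$, $r_{1_S}$, and the blindings $r_{2_1},\dots,r_{2_n}$, all simulated by running the same PRG --- $P_S$ only observes five lists of ciphertexts: $E_1(x_i)$ in step~\ProtocolEncryptedInput{}, $E_1(r_{1_i})$ in step~\ProtocolEncryptedPlayersRandoms{}, $E_1^\prime(x^\prime_{\rho_i}+r_{2_{\rho_i}})$ in step~\ProtocolEncryptedRerandomisedBlindedInput{}, $E_2(r_{2_{\rho_i}}+r_{3_i})$ in step~\ProtocolEncryptedBlindedBlinding{}, and $E_1(r_{3_i})$ in step~\ProtocolEncryptedBlindingBlinding{}.

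For each of the four messages encrypted under $\mathcal{CS}_1$ (steps~\ProtocolEncryptedInput{}, \ProtocolEncryptedPlayersRandoms{}, \ProtocolEncryptedRerandomisedBlindedInput{} and~\ProtocolEncryptedBlindingBlinding{}) the simulator outputs a value drawn uniformly at random from $\mathcal{C}_1$. Because $P_S$ does not hold $sk_1$, semantic security of the Damg\aa rd--Jurik cryptosystem --- already invoked for step~\ProtocolSequenceEncryptedInputsBlindings{} in Lemma~\ref{Lem:PrivacyPlayers} --- yields the required computational indistinguishability, so I can reuse that same argument on four new lines.

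The only genuinely new ingredient is step~\ProtocolEncryptedBlindedBlinding{}, where $P_S$ does hold $sk_2$ and will decrypt the ciphertext to the plaintext $r_{2_{\rho_i}}+r_{3_i}$. My plan is to argue that, because $r_{3_i}$ is chosen uniformly at random by the uncorrupted player $P_i$ independently of the $r_{2_j}$'s, the sum is uniform over $\mathcal{M}_2$ regardless of the value of $\rho_i$. The simulator therefore draws a plaintext uniformly from $\mathcal{M}_2$ and encrypts it under $pk_2$; the real and simulated plaintexts are then identically distributed, and semantic security of $\mathcal{CS}_2$ lifts this to indistinguishability of the ciphertexts observable to $P_S$.

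The step I expect to require the most care is verifying that the step-\ProtocolEncryptedBlindedBlinding{} argument survives the fact that $P_S$ already knows every $r_{2_j}$ and later recombines the decrypted sum with the messages from steps~\ProtocolEncryptedRerandomisedBlindedInput{} and~\ProtocolEncryptedBlindingBlinding{} while executing step~\ProtocolSequenceRerandomisedPermutedInputs{}. I must check that no correlation across the five lists leaks either $\rho_i$ or any $x_i$; this reduces to observing that $\rho_i$ depends purely on player-side data generated in steps~\ProtocolSequencePermutedRandomValues{}--\ProtocolRandomPositionHash{}, that $r_{3_i}$ perfectly masks $r_{2_{\rho_i}}$ inside the decrypted sum, and that the remaining $\mathcal{CS}_1$-ciphertexts carry no further information under semantic security. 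Together with Lemma~\ref{Lem:PrivacyPlayers}, this lemma will then yield Theorem~\ref{Thrm:InputPrivacy} under the stated non-collusion assumption.
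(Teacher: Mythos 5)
Your proposal is correct and follows essentially the same route as the paper's proof: simulate the four $\mathcal{CS}_1$-encrypted messages (steps~\ProtocolEncryptedInput{}, \ProtocolEncryptedPlayersRandoms{}, \ProtocolEncryptedRerandomisedBlindedInput{}, \ProtocolEncryptedBlindingBlinding{}) by uniform elements of $\mathcal{C}_1$ via semantic security, and simulate the decryptable message of step~\ProtocolEncryptedBlindedBlinding{} by a uniform plaintext in $\mathcal{M}_2$, justified by the one-time-pad blinding with $r_{3_i}$. Your additional check that the cross-message correlations exploited in step~\ProtocolSequenceRerandomisedPermutedInputs{} leak nothing is a reasonable extra precaution that the paper's proof does not spell out.
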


\begin{proof} The proof of Lemma~\ref{Lem:PrivacyServiceProvider} gives the service provider's view and simulator. Then, the computational indistinguishability of the view and the simulator's output is shown.

	The service provider $P_S$ does not have an input. Its output \reviewremove{is} \reviewadd{are} the permuted, rerandomized encryptions of the players' inputs. It knows the secret decryption key $sk_2$ and can decrypt any $c_j = E_2(x_j)$. An arrow ``$\rightarrow$'' shows the plaintexts that it can compute given $sk_2$.
	It receives the following messages.
	\begin{enumerate}
		\renewcommand{\labelenumi}{\textbf{\theenumi}}
		\renewcommand{\theenumi}{1.\arabic{enumi}}
		
		\item $E_1(x_i)$
	
		\item $E_1(r_{1_i})$
		
		\renewcommand{\theenumi}{2.\arabic{enumi}}
		\setcounter{enumi}{6}
		
		\item $E_1^\prime(x_{\rho_i}^\prime+r_{2_{\rho_i}})$
		
		\item $E_2(r_{2_{\rho_i}}+r_{3_i}) \rightarrow r_{2_{\rho_i}}+r_{3_i}$
		
		\item $E_1(r_{3_i})$
	\end{enumerate}
	
	If a message can be decrypted, the service provider's simulator $\mathcal{S}_{P_S}$ simulates the underlying plaintext. %
	It generates the following simulated messages.
	\begin{enumerate}
		\renewcommand{\labelenumi}{\textbf{\theenumi}}
		\renewcommand{\theenumi}{1.\arabic{enumi}}
		
		\item A random value $r_{\ProtocolEncryptedInput{}}$, uniformly chosen from $dom(E_1(\cdot)) = \mathcal{C}_1$
		
		\item A random value $r_{\ProtocolEncryptedPlayersRandoms{}}$, uniformly chosen from $dom(E_1(\cdot)) = \mathcal{C}_1$
		
		\renewcommand{\theenumi}{2.\arabic{enumi}}
		\setcounter{enumi}{6}
		
		\item A random value $r_{\ProtocolEncryptedRerandomisedBlindedInput{}}$, uniformly chosen from $dom(E_1(\cdot)) = \mathcal{C}_1$
		
		\item A random value $r_{\ProtocolEncryptedBlindedBlinding{}}$, uniformly chosen from $dom(D_2(\cdot)) = \mathcal{M}_2$
		
		\item A random value $r_{\ProtocolEncryptedBlindingBlinding{}}$, uniformly chosen from $dom(E_1(\cdot)) = \mathcal{C}_1$
	\end{enumerate}
	
	We show that the simulator's output and the service provider's views are computationally indistinguishable. %
	\\
	The messages of steps~\ProtocolEncryptedInput{},~\ProtocolEncryptedPlayersRandoms{},~\ProtocolEncryptedRerandomisedBlindedInput{}, and~\ProtocolEncryptedBlindingBlinding{} are ciphertexts in $dom(E_1(\cdot)) = \mathcal{C}_1$, which cannot be decrypted by $P_S$. Based on the semantic security of the DamgÃ¥rdâ€“Jurik cryptosystem, the real messages are computationally indistinguishable from the simulator's output, which are random values uniformly chosen from $dom(E_1(\cdot)) = \mathcal{C}_1$. %
	\\
	In step~\ProtocolEncryptedBlindedBlinding{}, the value sent is the sum of a player's secret input and a random value that was chosen by $P_i$ uniformly from $\mathcal{M}_2 = dom(D_2(\cdot))$ and encrypted with $pk_2$. The service provider can decrypt this message, resulting in the original sum with one summand being a uniformly chosen random number. Therefore, it is identically distributed to the simulator's output, which is a value chosen uniformly at random from \reviewremove{$dom(\mathcal{M}_2) = dom(D_2(\cdot))$} \reviewadd{$\mathcal{M}_2 = dom(D_2(\cdot))$}.
	\\
	Given these comparisons, the described simulator for the service provider generates an output that is computationally indistinguishable from real views. \\
	This completes the proof of Lemma~\ref{Lem:PrivacyServiceProvider}.%
\end{proof}
\section{Proof of Correctness}\label{CorrectnessProof}

Correctness of our protocol is shown by proving Theorem~\ref{Thrm:ProtocolIsSecureShuffle}.

\begin{theorem}[Correctness]\label{Thrm:ProtocolIsSecureShuffle}
	The protocol SHUFFLE conducts a secret shuffle of the $n$ players' encrypted inputs. That is, for every sequence of ciphertexts $X = (...,E_{\reviewadd{1}}(x_i),...)$ with $1 \leq i \leq n$, the protocol SHUFFLE yields as output a sequence $\mathcal{X} = (...,E_{\reviewadd{1}}^\prime(x_{\pi(i)}),...)$ such that the ciphertexts $E_{\reviewadd{1}}^\prime(x_i) \neq E_{\reviewadd{1}}(x_i)$ have the same plaintexts $x_i$, but their order in $\mathcal{X}$ is randomly permuted by a permutation $\pi$. The permutation $\pi$ is not known to any participant as long as there is no collusion between any player and the service provider.
\end{theorem}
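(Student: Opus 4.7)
The plan is to decompose the claim into three obligations: (i) plaintext preservation (each $\chi_i$ in $\mathcal{X}$ decrypts to some original $x_j$); (ii) freshness of ciphertexts ($E_1^\prime(x_i) \neq E_1(x_i)$ with overwhelming probability); and (iii) randomness and secrecy of the composed permutation $\pi$.

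First, I would verify (i) by a direct algebraic check of step~\ProtocolSequenceRerandomisedPermutedInputs{} using the additive-homomorphism identities~\eqref{Eq:HEMult} and~\eqref{Eq:HEExp}. Applying $D_1$ to
\[
E_1^\prime(x^{\prime}_{\rho_i}+r_{2_{\rho_i}}) \cdot E_1\bigl((-1)\cdot(r_{2_{\rho_i}}+r_{3_i})\bigr) \cdot E_1(r_{3_i})
\]
yields $x^{\prime}_{\rho_i}+r_{2_{\rho_i}} - r_{2_{\rho_i}} - r_{3_i} + r_{3_i} = x^{\prime}_{\rho_i}$, and tracing $x^\prime_{\rho_i}$ back through step~\ProtocolSequenceEncryptedInputs{} gives $x^\prime_{\rho_i} = x_{\pi_2(\rho_i)}$. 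Thus the $i$-th entry of $\mathcal{X}$ decrypts to $x_{\pi(i)}$ where $\pi(i) := \pi_2(\rho_i)$. It also needs to be checked that $\pi$ is a bijection on $\{1,\dots,n\}$, which follows because $\pi_2$ is a permutation and $\rho_i = position(sort(H), h_i)$ is a permutation of $\{1,\dots,n\}$ whenever the hashes $h_1,\dots,h_n$ are pairwise distinct; the uniqueness check on $r^\prime_{1_j}$ in step~\ProtocolSequenceHashes{}, together with collision-resistance of $h(\cdot)$, gives this with overwhelming probability.

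For (ii), I would appeal directly to the rerandomization property~\eqref{Eq:Rerandomisation}: the output of step~\ProtocolSequenceRerandomisedPermutedInputs{} is, by construction, the product of $E_1^\prime(x^{\prime}_{\rho_i}+r_{2_{\rho_i}})$ (already a rerandomized ciphertext produced in step~\ProtocolEncryptedRerandomisedBlindedInput{}) with further independent ciphertexts, and Paillier rerandomization ensures that the resulting encryption differs from the original $E_1(x_i)$ with probability $1 - \varphi(\kappa)$ for some negligible $\varphi$.

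For (iii), I would split the argument according to which party one considers. Since $\pi = \pi_2 \circ \rho$, it suffices to argue that at least one of $\pi_2$ or $\rho$ is uniformly distributed and hidden from every party. The permutation $\pi_2$ is drawn uniformly at random by $P_S$ and is never transmitted in the clear, so it is unknown to the players; moreover, because the players only ever observe $\pi_2$ through the encrypted list $X^\prime$ (step~\ProtocolSequenceEncryptedInputs{}) whose plaintexts are blinded by fresh $r_{2_i}$, no information on $\pi_2$ leaks (this is essentially what Lemma~\ref{Lem:PrivacyPlayers} already shows). Conversely, the indices $\rho_i$ are computed by the players from the sorted hash list $H^\prime$; because $r_{1_S}$ is chosen by $P_S$ only after the players have committed their $r_{1_i}$ in step~\ProtocolEncryptedPlayersRandoms{}, and because $h(\cdot)$ is modelled with uniform outputs on $dom(h(\cdot))$, the resulting $\rho$ is a uniformly random permutation of $\{1,\dots,n\}$. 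The service provider never learns any $\rho_i$: the ciphertexts it receives in steps~\ProtocolEncryptedRerandomisedBlindedInput{}--\ProtocolEncryptedBlindingBlinding{} are semantically secure under $pk_1$ and $pk_2$, and the only plaintext it recovers in step~\ProtocolSequenceRerandomisedPermutedInputs{} is $r_{2_{\rho_i}}+r_{3_i}$, which is perfectly blinded by the fresh $r_{3_i}\in\mathcal{M}_1$ (this is exactly the indistinguishability argued in Lemma~\ref{Lem:PrivacyServiceProvider}).

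I expect the main obstacle to be part~(iii), specifically the argument that no single party can reconstruct $\pi$ even combining all honestly received messages. The algebraic and rerandomization parts are essentially bookkeeping, but the randomness of $\rho$ needs the subtle observation that $r_{1_S}$ is committed after the $r_{1_i}$'s, which prevents a malicious-looking but semi-honest player from grinding $r_{1_i}$ to bias its own index. I would therefore conclude the proof by stating this ordering assumption explicitly and invoking the assumed uniform distribution of $h(\cdot)$ on $dom(h(\cdot))$ from Section~\ref{ProtocolPrerequisites} to conclude that $\pi$ is uniform and unknown to every non-colluding party, which together with (i) and (ii) matches Definition~\ref{Def:SecureShuffle}.
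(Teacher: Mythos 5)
Your proposal is correct and follows essentially the same route as the paper: the same three-way decomposition mirroring Definition~\ref{Def:SecureShuffle} (plaintext preservation via the algebraic telescoping in step~\ProtocolSequenceRerandomisedPermutedInputs{}, ciphertext freshness via rerandomization, and secrecy of $\pi$ argued separately for players and service provider via the composition of $\pi_2$ with the hash-derived index permutation), which the paper packages as Lemmas~\ref{Lem:PermutedCiphertexts},~\ref{Lem:SamePlaintexts}, and~\ref{Lem:PermutationPrivacy}. The only cosmetic difference is that you fold bijectivity of the index map into part~(i) and its uniformity into part~(iii), whereas the paper collects both in Lemma~\ref{Lem:PermutedCiphertexts}; your explicit form $\pi(i)=\pi_2(\rho_i)$ is consistent with the paper's $\pi = \pi_3 \circ \pi_2$.
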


To improve readability of the proof of correctness, we split Theorem~\ref{Thrm:ProtocolIsSecureShuffle} into the three Lemmas according to the properties of a secret shuffle as given in Definition~\ref{Def:SecureShuffle}, which we will prove separately. First, we will prove that the protocol SHUFFLE outputs a randomly permuted sequence (Lemma~\ref{Lem:PermutedCiphertexts}). Then, we show that the ciphertexts in the output sequence are different from those of the input sequence but encrypt the same plaintexts (Lemma~\ref{Lem:SamePlaintexts}). We complete the proof of correctness by proving that no participant learns the overall, random permutation (Lemma~\ref{Lem:PermutationPrivacy}).

\subsection{Randomly Permuted Ciphertexts}

\begin{lemma}[Randomly Permuted Ciphertexts]\label{Lem:PermutedCiphertexts}
	The ciphertexts in sequence $\mathcal{X} = (...,E_{\reviewadd{1}}^\prime(x_{\pi(i)}),...)$ output by the protocol SHUFFLE are permuted compared to the ciphertexts in the input sequence $X = (...,E_{\reviewadd{1}}(x_i),...)$ with a random permutation $\pi$.
\end{lemma}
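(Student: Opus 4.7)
}

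The plan is to express the overall permutation $\pi$ as a composition of two permutations and then separately establish well-definedness (bijectivity) and uniform randomness.

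First, I would pin down the structure. In step~\ProtocolSequenceEncryptedInputs{}, the service provider applies $\pi_2$ to the list of encrypted inputs, so the $i$-th slot of the list sent to the players contains (a blinded, encrypted version of) $x_{\pi_2(i)}$. In step~\ProtocolEncryptedRerandomisedBlindedInput{}, player $P_i$ returns the ciphertext sitting at position $\rho_i$ of that list, where $\rho_i$ is $P_i$'s rank in the sorted hash sequence. Finally, the order of the ciphertexts in $\mathcal{X}$ is determined by the index $i$ of the sending player. Thus the $i$-th entry of $\mathcal{X}$ corresponds to plaintext $x_{\pi_2(\rho_i)}$, so the overall permutation is $\pi(i) = \pi_2(\rho_i)$, i.e.\ $\pi = \pi_2 \circ \rho$, where $\rho$ is the map $i \mapsto \rho_i$.

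Next, I would show $\pi$ is a bona fide permutation. Since $\pi_2$ is a permutation by choice, it suffices to show that $\rho$ is a permutation, i.e.\ that the $\rho_i$ are pairwise distinct and cover $\{1,\ldots,n\}$. In step~\ProtocolSequenceHashes{} the players explicitly abort unless the decrypted $r_{1_j}^\prime$ are pairwise distinct; hence the concatenations $r_{1_j}^\prime \| r_{{1_S}}$ are distinct. By collision-resistance of $h(\cdot)$ (Section~\ref{Preliminaries}), the hashes $h_j$ are distinct except with negligible probability, so $sort(H)$ induces a total order and the ranks $\rho_i = position(sort(H), h_i)$ form a bijection between players and positions. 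The key subtle point I need to highlight is that every player computes the same sorted list $H^\prime$ from the same inputs, so the $\rho_i$ are globally consistent and each position in $\{1,\ldots,n\}$ is claimed by exactly one player.

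Finally, I would argue randomness of $\pi$. Because the hash outputs are assumed uniformly distributed over $dom(h(\cdot))$ (Section~\ref{ProtocolPrerequisites}) and the $n$ concatenated inputs are distinct, the resulting $n$-tuple of hashes is uniformly distributed over the set of $n$-tuples of distinct elements; hence the induced sort permutation $\rho$ is uniformly distributed over the symmetric group $S_n$ (up to negligible collision probability). Composing a uniformly random permutation $\rho$ with any fixed permutation $\pi_2$ yields a uniformly random permutation $\pi$; a fortiori, composing with the service provider's random $\pi_2$ also yields a uniformly random $\pi$. The main obstacle I anticipate is being precise about the randomness model: the hashes are deterministic functions of inputs chosen by the players (and of $r_{1_S}$), so ``uniformity'' here leans on the idealized assumption that $h(\cdot)$'s outputs are uniform on $dom(h(\cdot))$, and I would state this dependency explicitly rather than sweeping it into a random-oracle handwave.
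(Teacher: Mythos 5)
Your proposal is correct and follows essentially the same route as the paper's proof: distinctness of the $\rho_i$ via the abort check plus collision resistance, uniformity of the induced sort permutation from the assumed uniformity of the hashes, and composition with $\pi_2$ (the paper phrases the uniformity step as a random-oracle contradiction argument and defers the explicit decomposition $\pi = \pi_3 \circ \pi_2$ to the later Secret Permutation subsection, but the substance is the same). The only minor discrepancy is that the paper takes the order of $\mathcal{X}$ to be the network arrival order of the step~\ProtocolEncryptedRerandomisedBlindedInput{} messages rather than the sender index $i$, which does not affect the argument since that is just a further permutation independent of the plaintexts.
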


We prove Lemma~\ref{Lem:PermutedCiphertexts} by showing that the encrypted inputs sent in step~\ProtocolEncryptedRerandomisedBlindedInput{} are selected based on unique, random indices.

\begin{proof}\label{Prf:PermutedCiphertexts}
	\reviewadd{The order of sequence $X = (...,E_1(x_i),...)$ is determined by the order in which the service provider $P_S$ receives these inputs (step~\ProtocolEncryptedInput{}). The order of sequence $X^\prime = (...,E_1(x_{\pi_2(i)}+r_{2_i}),...)$ is determined by the random permutation $\pi_2$ (step~\ProtocolSequenceEncryptedInputs{}).}
	The order of sequence $R = (...,E_1(r_{1_i}),...)$ is determined by the order in which the service provider $P_S$ receives these inputs (step~\ProtocolEncryptedPlayersRandoms{}). The order of sequence $R^\prime = (...,E_1(r_{1_{\pi_1(i)}}),...)$ is determined by the random permutation $\pi_1$ (step~\ProtocolSequencePermutedRandomValues{}).
	\\
	The values $r_{1_i}$ and $r_{P_S}$ (step~\ProtocolServiceProvidersRandom{}) are chosen uniformly at random and known to each player. Starting from step~\ProtocolSequenceHashes{}, the random values $r_{1_i}$ are guaranteed to be distinct.
	Therefore, and as $h(\cdot)$ is a cryptographic hash function, ensuring collision resistance (see Section~\ref{ProtocolPrerequisites}), it follows that the $n$ hashes
	\begin{equation}\label{Eq:HashRandomValues}
		\begin{aligned}
			h_i &= h(r_{1_i}||r_{{1_S}})\\
			&= h(D_1(E_1(r_{1_i}))||r_{{1_S}})
		\end{aligned}
	\end{equation}
	in the sequence $H = (...,h_i,...)$ are distinct, except with negligible probability. Since the hashes of $h(\cdot)$ are required to be uniformly distributed among the domain $dom(h(\cdot))$ \reviewadd{(see Section~\ref{ProtocolPrerequisites})}, the values in $H$ are uniformly distributed among $dom(h(\cdot))$ too. Since every player $P_i$ knows its random value $r_{1_i}$ of step~\ProtocolEncryptedPlayersRandoms{} and $r_{1_S}$, \reviewadd{each} $P_i$ also knows its unique corresponding hash $h_i \in H$.
	\\
	The function $sort(\cdot)$ sorts a sequence in ascending order. Hence, the sequence $H^\prime = sort(H)$, computed in step~\ProtocolRandomPositionHash{} contains the same values as the sequence $H$ but sorted in ascending order. Therefore, each player $P_i$ knows the position of its hash in the sorted list of hashes $H^\prime$. This position is extracted with the function $position(H^\prime, h_i)$, which thus provides the correct index $\rho_i$ of $P_i$'s hash $h_i$ in $H^\prime$.
	\\
	Assume $\rho_i$ can be distinguished from a random element in $\{1,...,n\}$. This implies that the permutation applied by the function $sort(\cdot)$ can be distinguished from a random permutation. Sorting a sequence of distinct random values produces a random permutation over the random input values~\cite{CLRS09}. According to the above assumption, if the result of $sort(\cdot)$ is distinguishable from a random permutation, then the result of $h(\cdot)$ is distinguishable from random values as well. However, as a cryptographically secure hash function is a random oracle, this contradicts the assumption of a random oracle. According to this contradiction, the random indices $\rho_i$ are computationally indistinguishable from values chosen uniformly at random from $\{1,...,n\}$. Since the hashes in $H$, and therefore also the hashes in $H^\prime$, are distinct, the $n$ values $\rho_i \in \{1,...,n\}$ are also distinct.
	\\
	Consequently, the ciphertext $E_1(x^{\prime}_{\rho_i}+r_{2_i})$ selected from $X^\prime$ by player $P_i$ in step~\ProtocolEncryptedRerandomisedBlindedInput{} based on $\rho_i$ is randomly and exclusively selected. %
	\\
	The order of sequence $\mathcal{X} = (...,E_1^\prime(x^\prime_{\rho_i}),...)$ (step~\ProtocolSequenceRerandomisedPermutedInputs{}) is determined by the order in which the service provider receives the ciphertexts of steps~\ProtocolEncryptedRerandomisedBlindedInput{} to~\ProtocolEncryptedBlindingBlinding{} from the players. The ciphertext sent by $P_i$ encrypts some $P_j$'s secret input chosen \reviewadd{from a randomly permuted sequence} based on its uniformly distributed, unique index $\rho_i$. Hence, the order of encrypted %
	inputs in $\mathcal{X}$ is randomly permuted.
	\\
	This completes the proof of Lemma~\ref{Lem:PermutedCiphertexts}.%
\end{proof}

\subsection{Distinct Ciphertexts}

\begin{lemma}[Distinct Ciphertexts]\label{Lem:SamePlaintexts}
	The ciphertexts in sequence $\mathcal{X} = (...,E_{\reviewadd{1}}^\prime(x_{\pi(i)}),...)$ output by the protocol SHUFFLE encrypt the same plaintexts $x_i$ as the ciphertexts in the input sequence $X = (...,E_{\reviewadd{1}}(x_i),...)$. The ciphertexts $E^\prime_1(x_i) \in \mathcal{X}$ and $E_1(x_i) \in X$ encrypting the same plaintext $x_i$ are distinct, i.e.\reviewadd{,} $E_{\reviewadd{1}}^\prime(x_1) \neq E_{\reviewadd{1}}(x_1),...,E_{\reviewadd{1}}^\prime(x_n) \neq E_{\reviewadd{1}}(x_n)$, and cannot be mapped to each other by the service provider.
\end{lemma}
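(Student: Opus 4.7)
The plan is to decompose Lemma~\ref{Lem:SamePlaintexts} into its three distinct claims and tackle them in order: (i) plaintext equality, (ii) ciphertext distinctness, and (iii) unlinkability from the service provider's view. I will rely on the homomorphic identities~\eqref{Eq:HEMult}--\eqref{Eq:HEExp}, the rerandomization property~\eqref{Eq:Rerandomisation}, semantic security of the Damg{\aa}rd--Jurik cryptosystem, and the already-established index properties from Lemma~\ref{Lem:PermutedCiphertexts}.

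First I would verify plaintext equality by tracing the homomorphic computation in step~\ProtocolSequenceRerandomisedPermutedInputs. Using~\eqref{Eq:HEMult} and~\eqref{Eq:Rerandomisation}, the product
\begin{equation*}
E_1^\prime(x^\prime_{\rho_i}+r_{2_{\rho_i}}) \cdot E_1\!\left((-1)\cdot D_2(E_2(r_{2_{\rho_i}}+r_{3_i}))\right) \cdot E_1(r_{3_i})
\end{equation*}
decrypts to $x^\prime_{\rho_i}+r_{2_{\rho_i}}-r_{2_{\rho_i}}-r_{3_i}+r_{3_i}=x^\prime_{\rho_i}$, and because $x^\prime_{\rho_i}$ was obtained by extracting position $\rho_i$ from the $\pi_2$-permuted sequence in step~\ProtocolSequenceEncryptedInputs, it equals $x_{\pi_2(\rho_i)}$ and hence one of the original plaintexts. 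Since Lemma~\ref{Lem:PermutedCiphertexts} already establishes that the $\rho_i$ are distinct and that $\pi_2$ is a permutation, the multiset of plaintexts underlying $\mathcal{X}$ is exactly $\{x_1,\ldots,x_n\}$.

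Next I would show distinctness of ciphertexts. The key observation is that step~\ProtocolEncryptedRerandomisedBlindedInput explicitly performs a Paillier/Damg{\aa}rd--Jurik rerandomization via multiplication with $E_1(0)$, and the two additional multiplications in step~\ProtocolSequenceRerandomisedPermutedInputs involve freshly generated ciphertexts $E_1(-(r_{2_{\rho_i}}+r_{3_i}))$ and $E_1(r_{3_i})$, each adding independent randomness. By the rerandomization property cited after~\eqref{Eq:Rerandomisation}, a single rerandomization already yields a distinct ciphertext except with negligible probability; the subsequent fresh-ciphertext factors can only reinforce this, so $E_1^\prime(x_i)\neq E_1(x_i)$ holds with all but negligible probability in the security parameter.

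Finally, for the unlinkability claim, I would argue that no efficient algorithm run by $P_S$ can map $E_1(x_i)\in X$ to its rerandomized counterpart $E_1^\prime(x_j)\in\mathcal{X}$ with non-negligible advantage. Since $P_S$ does not hold $sk_1$, any candidate mapping distinguisher would have to separate rerandomizations of a given ciphertext from fresh encryptions of an arbitrary plaintext; the proof reduces such a distinguisher to an adversary against the semantic security of Damg{\aa}rd--Jurik, which by assumption succeeds only with negligible advantage. The main obstacle I anticipate is the third step: while (i) and (ii) are essentially a bookkeeping exercise in the homomorphism, (iii) needs to be stated as a reduction carefully enough to make explicit that all the auxiliary values the service provider decrypts, namely the blinded sums $r_{2_{\rho_i}}+r_{3_i}$, are statistically independent of the plaintexts $x_i$ (since $r_{3_i}$ is chosen uniformly at random by $P_i$), so they cannot serve as a side channel that undermines the semantic-security reduction.
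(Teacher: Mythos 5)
Your proposal is correct and follows essentially the same route as the paper's proof: the same telescoping homomorphic computation for plaintext preservation, the same appeal to the $E_1(0)$ rerandomization of step~\ProtocolEncryptedRerandomisedBlindedInput{} for ciphertext distinctness, and the same underlying reason for unlinkability (the service provider never learns the rerandomization factor). The only difference is one of rigor, in your favor: the paper argues unlinkability informally by saying $P_S$ cannot invert the rerandomization, whereas you phrase it as an explicit reduction to the semantic security of Damg{\aa}rd--Jurik and additionally note that the decrypted blinded sums $r_{2_{\rho_i}}+r_{3_i}$ are statistically independent of the $x_i$ and so cannot serve as a side channel.
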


We prove Lemma~\ref{Lem:SamePlaintexts} by showing that the operations performed on the uniquely and randomly selected ciphertexts change the ciphertexts without affecting their plaintexts.

\begin{proof}\label{Prf:SamePlaintexts}
	In step~\ProtocolSequenceEncryptedInputs{}, sequence $X = (...,E_1(x_i),...)$ of the input ciphertexts is randomly permuted with a permutation $\pi_2$ and the underlying plaintexts are blinded with a random value $r_{2_i}$, resulting in the sequence
	\begin{equation}\label{Eq:SequenceXPrime}
		\begin{aligned}
			X^\prime &= \left(..., E_1\left(x_{\pi_2(i)}\right) \cdot E_1\left(r_{2_i}\right), ...\right)\\
			&= \left(..., E_1\left(x_{\pi_2(i)}+r_{2_i}\right), ...\right)\\
			&= \left(..., E_1\left(x^\prime_i+r_{2_i}\right), ...\right).
		\end{aligned}
	\end{equation}
	Permutations only affect the order of a sequence's elements but not the elements themselves (see Section~\ref{Preliminaries}). Hence, sequence $X^\prime$ contains encryptions of the original $n$ \reviewremove{ciphertexts $E_(x_i)$} \reviewadd{plaintexts $x_i$}, blinded with $n$ random values $r_{2_i}$.
	\\
	In step~\ProtocolEncryptedRerandomisedBlindedInput{}, for each player $P_i$ with individual index $\rho_i$ (see Proof of Lemma~\ref{Lem:PermutedCiphertexts}), the ciphertext $E_1(x^\prime_i+r_{2_i}) \in X^\prime$ is rerandomized by multiplication with $E_1(0)$. The result is the different ciphertext $E_1^\prime(x^{\prime}_{\rho_i}+r_{2_{\rho_i}}) \neq E_1\left(x^\prime_i+r_{2_i}\right)$ of the same plaintext (see~\eqref{Eq:CorrectnessProofRerandomisation}).
	\begin{equation}\label{Eq:CorrectnessProofRerandomisation}
	\begin{aligned}
		E_1^\prime\left(x^{\prime}_{\rho_i}+r_{2_{\rho_i}}\right) &= E_1\left(x^{\prime}_{\rho_i}+r_{2_{\rho_i}}\right) \cdot E_1\left(0\right)\\
		&= E_1\left(x^{\prime}_{\rho_i}+r_{2_{\rho_i}}+0\right)
	\end{aligned}
	\end{equation}
	As the service provider knows the secret decryption key $sk_2$, in step~\ProtocolSequenceRerandomisedPermutedInputs{}, it can decrypt the ciphertext $E_2(r_{2{\rho_i}}+r_{3_i})$, negate it via multiplication by $-1$, and encrypt \reviewadd{it} with $pk_1$. Multiplying the resulting ciphertext $E_1(-r_{2{\rho_i}}-r_{3_i})$ with the ciphertexts of steps~\ProtocolEncryptedRerandomisedBlindedInput{} and~\ProtocolEncryptedBlindingBlinding{} yields the rerandomized, unblinded ciphertext $E_1^\prime(x^\prime_{\rho_i})$ of an input value $x^\prime_{\rho_i}$ as follows.
	\begin{equation}
	\begin{aligned}\label{Eq:CorrectnessProofRerandomisationResult}
		E_1^\prime\left(x^\prime_{\rho_i}\right) &= E_1^\prime\left(x^{\prime}_{\rho_i}+r_{2_{\rho_i}}+0\right)\\
			& \hspace{.5cm} \cdot E_1\left(\left(-1\right) \cdot D_2\left(E_2\left(r_{2{\rho_i}}+r_{3_i}\right)\right)\right)\\
			& \hspace{.5cm} \cdot E_1\left(r_{3_i}\right)\\
		&= E_1^\prime\left(x^{\prime}_{\rho_i}+r_{2_{\rho_i}}+0\right)\\
			& \hspace{.5cm} \cdot E_1\left(\left(-1\right) \cdot \left(r_{2{\rho_i}}+r_{3_i}\right)\right) \cdot E_1\left(r_{3_i}\right)\\
		&= E_1^\prime\left(x^{\prime}_{\rho_i}+r_{2_{\rho_i}}+0\right)\\
			& \hspace{.5cm} \cdot E_1\left(-r_{2{\rho_i}}-r_{3_i}\right) \cdot E_1\left(r_{3_i}\right)\\
		&= E_1^\prime\left(x^{\prime}_{\rho_i}+r_{2_{\rho_i}}+0-r_{2{\rho_i}}-r_{3_i}+r_{3_i}\right)\\
		&= E_1^\prime\left(x^{\prime}_{\rho_i}+0\right)
	\end{aligned}
	\end{equation}
	These rerandomized ciphertexts $E_1^\prime(x^\prime_{\rho_i})$ form sequence $\mathcal{X} = (...,E_1^\prime(x^\prime_{\rho_i}),...)$. From~\eqref{Eq:SequenceXPrime},~\eqref{Eq:CorrectnessProofRerandomisation}, and~\eqref{Eq:CorrectnessProofRerandomisationResult}, it follows that these ciphertexts encrypt the same plaintexts as the ciphertexts $E_1(x^\prime_{\rho_i}) \in X^\prime$ and therefore also the same plaintexts as the ciphertexts $E_1(x_i) \in X$.
	\\
	From~\eqref{Eq:CorrectnessProofRerandomisation}, it follows that the rerandomized ciphertexts $E_{\reviewadd{1}}^\prime(x_i) \in \mathcal{X}$ and the non-rerandomized ciphertexts $E_{\reviewadd{1}}(x_i) \in X$ that encrypt the same secret input $x_i$ are different from each other, i.e.\reviewadd{,} $E_{\reviewadd{1}}^\prime(x_1) \neq E_{\reviewadd{1}}(x_1),...,E_{\reviewadd{1}}^\prime(x_n) \neq E_{\reviewadd{1}}(x_n)$. As the service provider does not learn the probabilistic encryptions $E_{\reviewadd{1}}(0)$ used in step~\ProtocolEncryptedRerandomisedBlindedInput{} for rerandomization, it cannot invert the rerandomization of the ciphertexts $E_1(x^{\prime}_{\rho_i}+r_{2_{\rho_i}})$. Consequently, the service provider cannot map the rerandomized ciphertexts in $\mathcal{X}$ to the original ciphertexts in $X$.
	\\
	This completes the proof of Lemma~\ref{Lem:SamePlaintexts}.%
\end{proof}

\subsection{Secret Permutation}

The overall permutation $\pi$ applied during the protocol SHUFFLE consists of the following two composed, independent permutations.
\begin{itemize}
	\item The permutation $\pi_2$, applied by the service provider in step~\ProtocolSequenceEncryptedInputs{} to permute $X^\prime$.
	\item The permutation which the players implicitly apply to the output sequence $\mathcal{X}$ by selecting a ciphertext $E_1(x^{\prime}_{\rho_i}+r_{2_{\rho_i}})$ based on their random indices. We denote it by $\pi_3$.
\end{itemize}

That leads to the overall permutation $\pi(i) = \pi_3(\pi_2(i))$. 
As there are two different kinds of participants, we prove secrecy of $\pi$ separately for the players (Lemma~\ref{Lem:PermutationPrivacyPlayer}) and for the service provider (Lemma~\ref{Lem:PermutationPrivacyServiceProvider}).

\begin{lemma}[Secret Permutation -- Players]\label{Lem:PermutationPrivacyPlayer}
	The protocol SHUFFLE computes the shuffled sequence $\mathcal{X} = (...,E_{\reviewadd{1}}^\prime(x_{\pi(i)}),...)$ from the input sequence $X = (...,E_{\reviewadd{1}}(x_i),...)$ based on a random permutation $\pi$ such that no player learns the permutation $\pi$ as long as there is no collusion between any player and the service provider.
\end{lemma}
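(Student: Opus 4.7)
The plan is to leverage the decomposition $\pi(i) = \pi_3(\pi_2(i))$ from the paragraph preceding the lemma and argue that, from the view of any coalition $I$ of at most $t < n$ corrupted players, $\pi_2$ is (computationally) uniformly distributed on the symmetric group independently of the coalition's knowledge of $\pi_3$. Since the composition of a uniformly random permutation with any independent permutation is itself uniformly random, this suffices to conclude that $\pi$ is computationally indistinguishable from a uniformly random permutation for the coalition.

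First I would handle $\pi_2$. Although players decrypt the ciphertexts of step~\ProtocolSequenceEncryptedInputs{} to $x_{\pi_2(i)} + r_{2_i}$, the blindings $r_{2_i}$ are drawn uniformly at random from $\mathcal{M}_1$ by $P_S$ and reach the players only encrypted under $\mathcal{CS}_2$ in step~\ProtocolSequenceEncryptedInputsBlindings{}, whose secret key $sk_2$ is kept by $P_S$. By additive one-time-pad secrecy over $\mathcal{M}_1$, the decrypted sums are distributed uniformly on $\mathcal{M}_1$ and carry no information about $\pi_2$, and by the semantic security of the Damgård–Jurik cryptosystem, neither do the $E_2(r_{2_i})$ messages. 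This observation is already packaged by Lemma~\ref{Lem:PrivacyPlayers}: its simulator $\mathcal{S}_{P_i}$ produces a view that is computationally indistinguishable from the real one without any reference to $\pi_2$, so every function the coalition computes on its view is computationally independent of $\pi_2$.

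Next I would handle $\pi_3$. Each $P_i \in I$ can reconstruct only its own index $\rho_i$ because the permutation $\pi_1$ of step~\ProtocolSequencePermutedRandomValues{} severs the link between the values $r_{1_j}^\prime$ and their authors (that link is known only to $P_S$), and because the pairwise secure channels of Section~\ref{ProtocolPrerequisites} prevent any player from observing another player's messages in steps~\ProtocolEncryptedRerandomisedBlindedInput{}--\ProtocolEncryptedBlindingBlinding{}. By the proof of Lemma~\ref{Lem:PermutedCiphertexts}, the $\rho_i$'s are computationally indistinguishable from distinct values chosen uniformly at random from $\{1,\dots,n\}$ under the random-oracle model for $h(\cdot)$; hence, conditioned on the coalition's $\{\rho_i\}_{i \in I}$, $\pi_3$ restricted to the remaining positions is computationally indistinguishable from a uniformly random permutation of the remaining indices, and in particular is independent of $\pi_2$.

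Finally I would close by composition: for any fixed realisation of $\pi_3$ and any target $\tau$, $\Pr[\pi_3 \circ \pi_2 = \tau] = \Pr[\pi_2 = \pi_3^{-1} \circ \tau] = 1/n!$ because $\pi_2$ is uniform and independent of $\pi_3$ from the coalition's perspective, so $\pi$ is computationally indistinguishable from a uniformly random permutation in that view. The main obstacle will be formalising this independence cleanly; the safest route is to appeal directly to Lemma~\ref{Lem:PrivacyPlayers}, so that any distinguishing advantage the coalition could gain about $\pi_2$ (and hence about $\pi$) reduces to a distinguisher against the simulator's output, which is fixed and constructed without $\pi_2$, yielding at most a negligible advantage.
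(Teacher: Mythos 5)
Your proposal is correct and its core argument is the same as the paper's: the paper's proof of this lemma likewise reduces everything to the secrecy of $\pi_2$, citing the proof of Lemma~\ref{Lem:PrivacyPlayers} to argue that players cannot unblind the ciphertexts $E_1(x_{\pi_2(i)}+r_{2_i})$ and hence cannot reconstruct $\pi_2$. The additional work you do on $\pi_3$ and on the composition $\pi = \pi_3 \circ \pi_2$ being uniform whenever one factor is uniform and independent is not part of the paper's proof of this lemma but is exactly the content the paper defers to Lemma~\ref{Lem:PermutationPrivacy}, so you have simply merged the two steps into one (slightly more careful) argument.
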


\begin{proof}
	Permutation $\pi_2$ is chosen at random by the service provider in step~\ProtocolSequenceEncryptedInputs{}. Collusion between players and the service provider is excluded. Hence, $\pi_2$ cannot be reconstructed by the players from the sequence $X^\prime$ as they cannot recover their blinded secret inputs from the ciphertexts $E_1(x_{\pi_2(i)}+r_{2_i})$ (see \reviewremove{the p} \reviewadd{P}roof of Lemma~\ref{Lem:PrivacyPlayers}). Therefore, the players cannot learn the permutation $\pi_2$, except with negligible probability.
	\\
	This completes the proof of Lemma~\ref{Lem:PermutationPrivacyPlayer}.%
\end{proof}

\begin{lemma}[Secret Permutation -- Service Provider]\label{Lem:PermutationPrivacyServiceProvider}
	The protocol SHUFFLE computes the shuffled sequence $\mathcal{X} = (...,E_{\reviewadd{1}}^\prime(x_{\pi(i)}),...)$ from the input sequence $X = (...,E_{\reviewadd{1}}(x_i),...)$ based on a random permutation $\pi$ such that the service provider cannot learn the permutation $\pi$ as long as there is no collusion between any player and the service provider.
\end{lemma}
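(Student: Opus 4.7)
The plan is to reduce the secrecy of the overall permutation $\pi$ to the secrecy of $\pi_3$ and to leverage both Lemma~\ref{Lem:PrivacyServiceProvider} and Lemma~\ref{Lem:SamePlaintexts}. Since $\pi(i)=\pi_3(\pi_2(i))$ and $\pi_2$ was chosen by $P_S$ itself, it suffices to argue that $P_S$ learns at most negligibly much about $\pi_3$. The permutation $\pi_3$ is determined entirely by the tuple of random indices $(\rho_1,\dots,\rho_n)$, which in turn are determined by the hashes $h_i=h(r_{1_i}\|r_{1_S})$ and the sorting function applied in step~\ProtocolRandomPositionHash{}.

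First, I would argue that $P_S$ cannot compute the $\rho_i$ from its first-round view. The seed $r_{1_S}$ is known to $P_S$, but the values $r_{1_i}$ arrive only as ciphertexts $E_1(r_{1_i})$ under $\mathcal{CS}_1$. Because $sk_1$ is held exclusively by the players and collusion with any player is excluded, semantic security of the DamgÃ¥rdâ€“Jurik cryptosystem implies that these ciphertexts are computationally indistinguishable from uniformly chosen elements of $\mathcal{C}_1$. Hence $P_S$ has at most negligible information about any particular $r_{1_i}$, and by collision resistance of $h(\cdot)$ and the random-oracle modeling already invoked in the proof of Lemma~\ref{Lem:PermutedCiphertexts}, it has at most negligible information about the induced sort order.

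Next, I would rule out the alternative route of reconstructing $\pi_3$ from the second-round messages. Here the key observation is that by Lemma~\ref{Lem:SamePlaintexts}, the ciphertexts $E_1^\prime(x^\prime_{\rho_i}+r_{2_{\rho_i}})$ received in step~\ProtocolEncryptedRerandomisedBlindedInput{} are rerandomizations that $P_S$ cannot link back to any entry of $X^\prime$; likewise the blindings ensure that the only plaintext $P_S$ ever recovers in this round is the uniformly distributed sum $r_{2_{\rho_i}}+r_{3_i}$, which is independent of $\rho_i$. More formally, I would invoke the simulator $\mathcal{S}_{P_S}$ constructed in the proof of Lemma~\ref{Lem:PrivacyServiceProvider}: $\mathcal{S}_{P_S}$ produces a view computationally indistinguishable from the real view while being completely oblivious to the indices $\rho_i$ and hence to $\pi_3$. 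Any efficient algorithm extracting $\pi_3$ from the real view would therefore yield a distinguisher between the real view and $\mathcal{S}_{P_S}$'s output, contradicting Lemma~\ref{Lem:PrivacyServiceProvider}.

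Combining the two parts, $P_S$ cannot recover $\pi_3$ except with negligible probability, and therefore cannot recover $\pi=\pi_3\circ\pi_2$. The main obstacle I anticipate is arguing cleanly that the auxiliary ciphertexts in steps~\ProtocolEncryptedBlindedBlinding{} and~\ProtocolEncryptedBlindingBlinding{} do not inadvertently leak the association between player $P_i$ and the index $\rho_i$; the cleanest way to dispose of this is by citing the already-proven simulator of Lemma~\ref{Lem:PrivacyServiceProvider} rather than re-doing the indistinguishability argument message by message. This completes the proof of Lemma~\ref{Lem:PermutationPrivacyServiceProvider}.
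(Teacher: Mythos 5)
Your proof is correct and follows essentially the same two-pillar argument as the paper's: the rerandomization of the step~\ProtocolEncryptedRerandomisedBlindedInput{} ciphertexts prevents $P_S$ from linking them back to $X^\prime$, and the semantic security of $\mathcal{CS}_1$ prevents $P_S$ from recovering the $r_{1_i}$, hence the hashes and the indices $\rho_i$ that define $\pi_3$. Your additional step of packaging the second-round messages (including steps~\ProtocolEncryptedBlindedBlinding{} and~\ProtocolEncryptedBlindingBlinding{}) via the simulator $\mathcal{S}_{P_S}$ of Lemma~\ref{Lem:PrivacyServiceProvider} is a modest formal strengthening of the paper's more direct, message-by-message argument, not a different route.
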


\begin{proof}
	The ciphertext sent by $P_i$ in step~\ProtocolEncryptedRerandomisedBlindedInput{} is rerandomized (see \reviewremove{p} \reviewadd{P}roof of Lemma~\ref{Lem:SamePlaintexts}) and encrypts some player $P_{\rho_i}$'s secret input. It is chosen based on $P_i$'s uniformly distributed, unique index $\rho_i$. Therefore, the order of the encrypted (rerandomized) inputs in $\mathcal{X}$ computed in step~\ProtocolSequenceRerandomisedPermutedInputs{} is randomly permuted by the permutation $\pi_3$. The service provider cannot map the rerandomized ciphertexts of step~\ProtocolEncryptedRerandomisedBlindedInput{} to those of sequence $X^\prime$ with probability better than guessing as it does not learn the probabilistic encryptions of $0$, $E_1(0)$, and as the corresponding distributions are computationally indistinguishable (see Section~\ref{Preliminaries}). Therefore, an inversion of this permutation is only possible given the players' indices $\rho_i$, which are chosen uniformly at random (see \reviewremove{the p} \reviewadd{P}roof of Lemma~\ref{Lem:PermutedCiphertexts}). The service provider cannot decrypt the ciphertexts $E_1(r_{1_i})$ of step~\ProtocolEncryptedPlayersRandoms{} (see \reviewremove{p} \reviewadd{P}roof of Lemma~\ref{Lem:PrivacyServiceProvider}) and collusion between players and the service provider is excluded. Hence $P_S$ cannot compute the players' hashes of step~\ProtocolSequenceHashes{} and it cannot compute the players' random indices. Therefore, the order of the ciphertexts in $\mathcal{X}$ is randomly permuted by $\pi_3$, which can only be reconstructed by $P_S$ with negligible probability.
	\\
	This completes the proof of Lemma~\ref{Lem:PermutationPrivacyServiceProvider}.%
\end{proof}

As the final step of our proof of secrecy of the permutation $\pi$, we show that the proofs of Lemmas~\ref{Lem:PermutationPrivacyPlayer} and~\ref{Lem:PermutationPrivacyServiceProvider} are sufficient to prove that no participant can learn the overall permutation $\pi$ as long as there is no collusion between any player and the service provider, formalized in the following Lemma.

\begin{lemma}[Secret Permutation]\label{Lem:PermutationPrivacy}
	The protocol SHUFFLE computes the shuffled sequence $\mathcal{X} = (...,E_{\reviewadd{1}}^\prime(x_{\pi(i)}),...)$ from the input sequence $X = (...,E_{\reviewadd{1}}(x_i),...)$ based on a random permutation $\pi$. The permutation $\pi$ is not known to any participant as long as there is no collusion between any player and the service provider.
\end{lemma}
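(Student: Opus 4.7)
The plan is to derive Lemma~\ref{Lem:PermutationPrivacy} as a direct consequence of the two preceding lemmas, exploiting the factorization $\pi(i) = \pi_3(\pi_2(i))$ and the no-collusion assumption. Since an adversary $\mathcal{A}$ corrupts either a subset of the players or the service provider (but not both), the overall permutation can be analysed by dispatching each case to the corresponding preceding lemma.

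First I would observe that $\pi_2$ is chosen privately by $P_S$ in step~\ProtocolSequenceEncryptedInputs{}, while $\pi_3$ is induced by the random indices $\rho_i$, which in turn depend only on the players' random values $r_{1_i}$, the service provider's seed $r_{1_S}$, and the behaviour of the hash function $h(\cdot)$. The two permutations are produced on disjoint sides of the protocol, and — up to the negligible events (hash collisions, non-distinct $r_{1_i}$) already bounded in the proof of Lemma~\ref{Lem:PermutedCiphertexts} — each is uniformly distributed over the symmetric group $S_n$ from the perspective of the opposite side.

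Next I would treat the two adversarial cases. If $\mathcal{A}$ corrupts $t$ players but not $P_S$, then by Lemma~\ref{Lem:PermutationPrivacyPlayer} the factor $\pi_2$ is computationally hidden from $\mathcal{A}$; composing whatever partial knowledge of $\pi_3$ the corrupted players may have (essentially their own indices $\rho_i$) with a computationally hidden uniform $\pi_2$ leaves $\pi$ computationally indistinguishable from uniform over $S_n$. Symmetrically, if $\mathcal{A}$ corrupts $P_S$, then by Lemma~\ref{Lem:PermutationPrivacyServiceProvider} the factor $\pi_3$ is computationally hidden from $\mathcal{A}$; composing the known $\pi_2$ with a hidden, uniform $\pi_3$ again makes $\pi$ indistinguishable from uniform. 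In either case, the adversary's advantage in recovering $\pi$ is negligible.

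The main obstacle will be formalising that composition preserves computational hiding. If there were a PPT distinguisher that recovers $\pi$ with non-negligible advantage, one could post-multiply by $\pi_2^{-1}$ (in the player-corruption case) or pre-multiply by $\pi_2^{-1}$ (in the service-provider case, using the known $\pi_2$) to extract the hidden factor, contradicting Lemma~\ref{Lem:PermutationPrivacyPlayer} or Lemma~\ref{Lem:PermutationPrivacyServiceProvider} respectively. The no-collusion hypothesis is essential in this reduction, since otherwise a corrupted player could contribute its known $\rho_i$ to a corrupted $P_S$ and thereby reunite the two factors that the protocol has deliberately kept apart.
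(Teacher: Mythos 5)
Your proposal is correct and takes essentially the same route as the paper: factor $\pi(i)=\pi_3(\pi_2(i))$, split on whether the adversary corrupts players or the service provider, and dispatch each case to Lemma~\ref{Lem:PermutationPrivacyPlayer} or Lemma~\ref{Lem:PermutationPrivacyServiceProvider}; in fact your explicit reduction showing that composition with a hidden uniform factor preserves hiding is more detailed than the paper's one-line assertion that knowing $\pi$ requires knowing both factors. The only blemish is a notational slip in the player-corruption case, where extracting the hidden factor requires composing with $\pi_3^{-1}$ (the factor the corrupted players partially know), not $\pi_2^{-1}$.
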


\begin{proof}
	To compute the composed permutation $\pi$, one needs to know both $\pi_2$ and $\pi_3$. As both independent permutations $\pi_2$ and $\pi_3$ are random, so is their composition $\pi$. To show that no participant can learn $\pi$, it is sufficient to show that no single participant can learn both $\pi_2$ and $\pi_3$. That is, for the composite permutation $\pi$ to be private, at least one of the two permutations $\pi_2$ or $\pi_3$ needs to be private for each participant.
	The proof of Lemma~\ref{Lem:PermutationPrivacyPlayer} proves that none of the $n$ players can learn $\pi$ as long as there is no collusion between any player and the service provider. The proof of Lemma~\ref{Lem:PermutationPrivacyServiceProvider} proves that the service provider cannot learn $\pi$ as long as there is no collusion between any player and the service provider.
	\\
	This concludes the proof of Theorem~\ref{Lem:PermutationPrivacy}.%
\end{proof}

\subsection{Summary}

As the final step of our proof of correctness, we show that the proofs of Lemmas~\ref{Lem:PermutedCiphertexts},~\ref{Lem:SamePlaintexts}, and~\ref{Lem:PermutationPrivacy} are sufficient to prove Theorem~\ref{Thrm:ProtocolIsSecureShuffle}, i.e.\reviewadd{,} Lemmas~\ref{Lem:PermutedCiphertexts},~\ref{Lem:SamePlaintexts}, and~\ref{Lem:PermutationPrivacy} imply Theorem~\ref{Thrm:ProtocolIsSecureShuffle}.

\begin{proof}
	From the proof of Lemma~\ref{Lem:PermutedCiphertexts}, it follows that the ciphertexts in sequence $\mathcal{X} = (...,E_{\reviewadd{1}}^\prime(x_{\pi(i)}),...)$ encrypting the plaintexts $x_i$ are permuted compared to the ciphertexts in sequence $X = (...,E_{\reviewadd{1}}(x_i),...)$ with a random permutation $\pi$. According to the proof of Lemma~\ref{Lem:PermutationPrivacy}, the permutation $\pi$ is not known to any participant as long as there is no collusion between any player and the service provider. From the proof of Lemma~\ref{Lem:SamePlaintexts}, it follows that the ciphertexts in sequence $\mathcal{X} = (...,E_{\reviewadd{1}}^\prime(x_{\pi(i)}),...)$ output by the protocol SHUFFLE encrypt the same plaintexts $x_i$ as the ciphertexts in the input sequence $X = (...,E_{\reviewadd{1}}(x_i),...)$. It also proves that the ciphertexts $E^\prime_1(x_i) = \mathcal{X}$ and $E_1(x_i) \in X$ encrypting the same plaintext $x_i$ are distinct, i.e.\reviewadd{,} $E_{\reviewadd{1}}^\prime(x_1) \neq E_{\reviewadd{1}}(x_1),...,E_{\reviewadd{1}}^\prime(x_n) \neq E_{\reviewadd{1}}(x_n)$. The combination of these proofs shows that the protocol SHUFFLE performs a secret shuffle of the $n$ players' encrypted inputs. For every sequence of ciphertexts $X = (...,E_{\reviewadd{1}}(x_i),...)$ with $1 \leq i \leq n$, the protocol SHUFFLE yields as output a sequence $\mathcal{X} = (...,E_{\reviewadd{1}}^\prime(x_{\pi(i)}),...)$ such that the ciphertexts $E_{\reviewadd{1}}^\prime(x_i) \neq E_{\reviewadd{1}}(x_i)$ have the same plaintexts $x_i$. Their order in $\mathcal{X}$ is randomly permuted by a permutation $\pi$. This permutation is not known to any participant as long as there is no collusion between any player and the service provider.
	\\
	This concludes the proof of Theorem~\ref{Thrm:ProtocolIsSecureShuffle}.%
\end{proof}
\section{Performance Evaluation}\label{Evaluationn}

\begin{figure*}[t!]
	\centering{
		\subfloat[Execution Time for $1024$ Bits]{\includegraphics[width=2.3in]{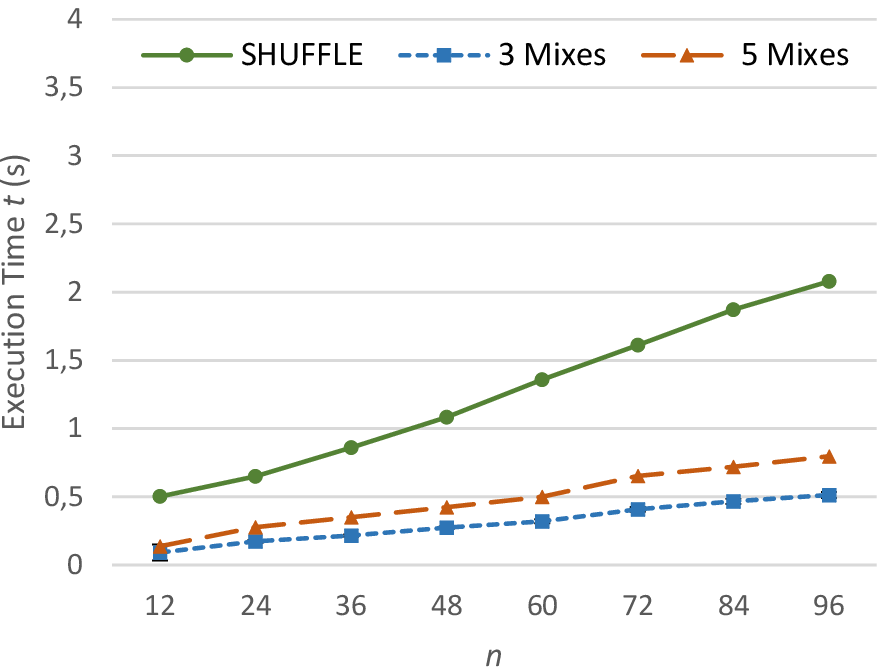}%
			\label{fig:EvaluationPeerGroupSize1024}}
		\hspace{0.15cm}
		\subfloat[Execution Time for $2048$ Bits]{\includegraphics[width=2.3in]{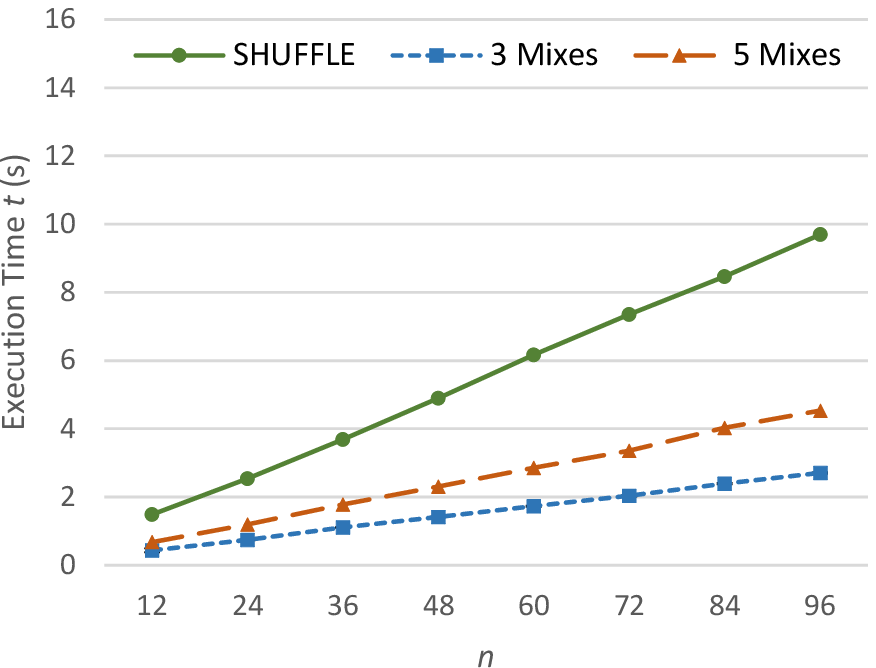}%
			\label{fig:EvaluationPeerGroupSize2048}}
		\hspace{0.15cm}
		\subfloat[Computation-to-Communication Ratio]{\includegraphics[width=2.3in]{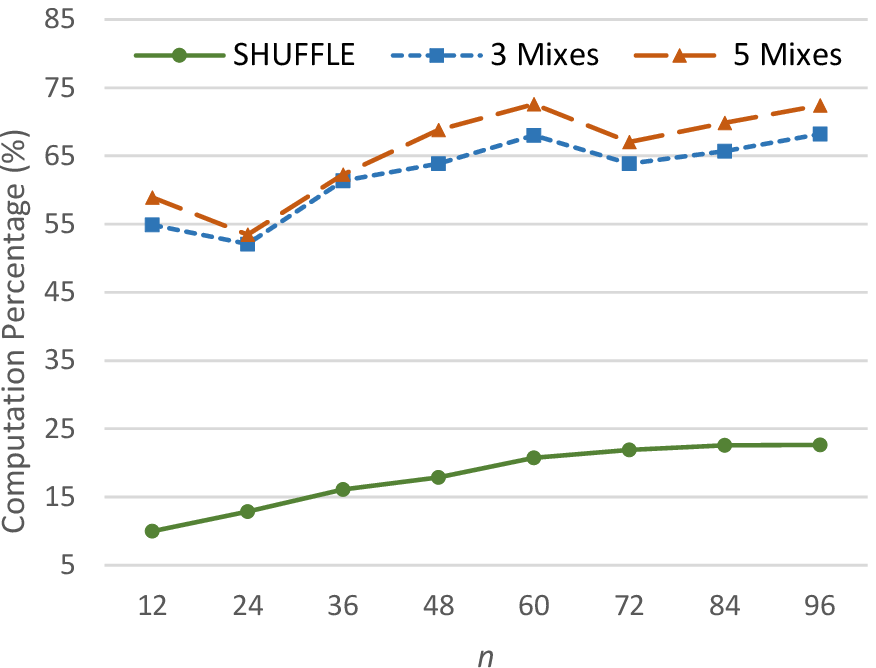}%
			\label{fig:EvaluationComputationPercentage}}}
	\caption{Results of the Empirical Performance Analysis}
\end{figure*}

The performance evaluation of our protocol SHUFFLE is twofold: We first investigate its asymptotic computational, communication, and round complexity in a theoretical analysis. Then, we examine its performance in an empirical analysis and compare it to the performance of mix networks.

\subsection{Asymptotic Complexity}\label{EvaluationAsmptoticComplexity}

\subsubsection{Round Complexity}

As depicted in Table~\ref{tab:DesignAdaptedProtocolTable}, the protocol consists of two rounds and a total of twelve protocol steps. Both values are independent of the number of players $n$. Therefore, the round complexity is constant in $n$, i.e.\reviewadd{,} $\mathcal{O}(1)$.

\subsubsection{Computational Complexity}\label{EvaluationAsmptoticComplexityComputation}

We investigate the number of operations that need to be carried out by the service provider and each player, respectively. We restrict our considerations to the cryptographic operations encryption, decryption, and ciphertext multiplication as they can be assumed to be the most complex ones. Their numbers are given in the middle columns of Table\reviewadd{s}~\ref{Tab:EvaluationAsmptoticComplexityService} and\reviewremove{ Table}~\ref{Tab:EvaluationAsmptoticComplexityPlayer}.
The resulting asymptotic computational complexity is $\mathcal{O}(n)$, i.e.\reviewadd{,} linear in the number of players $n$, for both the service provider and each player.

\begin{table}[t!]
	\renewcommand{\arraystretch}{1.4}
	\centering
	\caption{Service Provider's Computational and Communication Complexity of the Protocol}
	\begin{tabular}{c|ccc|c}
		\hline
		\textbf{Step} & \textbf{Enc} & \textbf{Dec} & \textbf{Mult} & \textbf{Message length}\\\hline\hline
		
		\textbf{\ProtocolSequencePermutedRandomValues{}} & & & & $n \cdot n \cdot l_{\mathcal{C}_1}$\\
		
		\textbf{\ProtocolSequenceEncryptedInputs{}} & $n$ & & $n$ & $n \cdot n \cdot l_{\mathcal{C}_1}$\\
		
		\textbf{\ProtocolSequenceEncryptedInputsBlindings{}} & $n$ & & & $n \cdot n \cdot l_{\mathcal{C}_2}$\\
		
		\textbf{\ProtocolServiceProvidersRandom{}} & & & & $n \cdot l_{\mathcal{M}_1}$ \\
		
		\textbf{\ProtocolSequenceRerandomisedPermutedInputs{}} & $n$ & $n$ & $2\cdot n$ & \\\hline
		
		\textbf{Total} & \textbf{$3\cdot n$} & \textbf{$n$} & \textbf{$3\cdot n$} & \textbf{$2 \cdot n^2 \cdot l_{\mathcal{C}_1} + n^2 \cdot l_{\mathcal{C}_2} + n \cdot l_{\mathcal{M}_1}$}\\\hline
	\end{tabular}
	\label{Tab:EvaluationAsmptoticComplexityService}
\end{table}

\begin{table}[t!]
	\renewcommand{\arraystretch}{1.3}
	\centering
	\caption{Each Player's Computational and Communication Complexity of the Protocol}
	\begin{tabular}{c|ccc|c}
		\hline
		\textbf{Step} & \textbf{Enc} & \textbf{Dec} & \textbf{Mult} & \textbf{Message length}\\\hline\hline
		
		\textbf{\ProtocolEncryptedInput{}} & $1$ & & & $l_{\mathcal{C}_1}$\\
		
		\textbf{\ProtocolEncryptedPlayersRandoms{}} & $1$ & & & $l_{\mathcal{C}_1}$\\
		
		\textbf{\ProtocolSequenceHashes{}} & & $n$ & & \\
		
		\textbf{\ProtocolRandomPositionHash{}} & & & & \\
		
		\textbf{\ProtocolEncryptedRerandomisedBlindedInput{}} & $1$ & & $1$ & $l_{\mathcal{C}_1}$\\
		
		\textbf{\ProtocolEncryptedBlindedBlinding{}} & $1$ & & $1$ & $l_{\mathcal{C}_2}$\\
		
		\textbf{\ProtocolEncryptedBlindingBlinding{}} & $1$ & & & $l_{\mathcal{C}_1}$\\\hline
		
		\textbf{Total} & \textbf{$5$} & \textbf{$n$} & \textbf{$2$} & \textbf{$4 \cdot l_{\mathcal{C}_1} + l_{\mathcal{C}_2}$}\\\hline
	\end{tabular}
	\label{Tab:EvaluationAsmptoticComplexityPlayer}
\end{table}

\subsubsection{Communication Complexity}\label{EvaluationAsmptoticComplexityCommunication}

To determine the communication complexity of the protocol, we investigate the length of the messages sent in each step of the protocol by the service provider and each player, respectively. These are given in the rightmost columns of Table\reviewadd{s}~\ref{Tab:EvaluationAsmptoticComplexityService} and\reviewremove{ Table}~\ref{Tab:EvaluationAsmptoticComplexityPlayer}. Here, $l_{\mathcal{M}_i}$ and $l_{\mathcal{C}_i}$ denote the maximum length of plaintexts in $\mathcal{M}_i$ and ciphertexts in $\mathcal{C}_i$, respectively. The total asymptotic communication complexity of each player is $\mathcal{O}(n)$, i.e.\reviewadd{,} linear in the number of players $n$. The service provider's communication complexity is $\mathcal{O}(n^2)$, i.e.\reviewadd{,} quadratic in the number of players $n$. Compared to related work, such as~\cite{Cha81}, our protocol has higher asymptotic communication complexity. However, we accept this loss as it helps reduce the computational complexity \reviewadd{asymptotically}.

\subsection{Empirical Performance}\label{EvaluationPerformance}

To investigate the practical performance of the protocol, we implemented both the players' and the service provider's part of the protocol and deployed them in a cloud-computing setting. The service provider was implemented as a Java HttpServlet and deployed in a cloud-computing instance with $96$ CPUs and $384$ GB RAM. To emulate sufficiently large numbers of independent players, we implemented the players' protocol steps in a Java HttpServlet and deployed the players in a Kubernetes cluster based on a cloud-computing instance with $96$ CPUs and $384$ GB RAM. We instantiated one Kubernetes node per player and provided each node with one CPU and $4$ GB RAM, which compares to the minimum requirements on a standard desktop computer. Therefore, we were able to emulate up to $96$ players. Service provider and players were deployed in different data centers in two major European cities with a distance of approximately $650$ kilometers to ensure a lifelike communication scenario. We used the additively homomorphic Paillier cryptosystem \reviewadd{for $\mathcal{CS}_1$ and $\mathcal{CS}_2$}.

For comparison, we implemented a simple yet efficient re-encryption mix network. Its construction is similar to the one described in~\cite{GJJS04}, but instead of the ElGamal cryptosystem with universal re-encryption, we used the standard version of Paillier's cryptosystem. Re-encryption (rerandomization) is performed given the public key of the players, which is a valid approach as the senders, i.e., players, in the shuffling scenario share the same key and the recipient, i.e., service provider, is not supposed to decrypt the received confidential data. We implemented the mixes as Java HttpServlets and deployed them in a similar cloud-computing setting as above, running each mix on an instance with $96$ CPUs and $384$ GB RAM. In a cascade of mixes, each mix receives all the messages at the same time in one batch, permutes and re-encrypts them, and forwards the full batch to the next mix or the recipient. This matches the communication setting of the service provider having the mix network shuffle all the messages once it received the full list from the players.

Fig.~\ref{fig:EvaluationPeerGroupSize1024} depicts the execution time $t$ relatively to the number of messages $n$ for $1024$-bit keys for our shuffling protocol and for mix networks with cascades of three and five mixes, respectively. Shuffling $96$ inputs with out shuffling protocol took $2.08$ seconds while the mix networks performed shuffling in $0.51$ and $0.80$ seconds, respectively. For $2048$-bit keys, shuffling $96$ inputs took $9.69$ seconds with our protocol and $2.71$ and $4.53$ seconds with mix networks (see Fig.~\ref{fig:EvaluationPeerGroupSize2048}). For both key lengths, the execution time of our shuffling protocol grows linearly in the number of players. Most importantly, the empirical results show that a mix network of five mixes with appropriate key length is only $2.14$ times faster than our shuffling protocol. However, recall that to achieve this performance, mix networks require multiple independent servers to perform the mixing whereas our shuffling protocol requires only a single server. Given the linear nature of re-encryption mix networks, one can reasonably assume that our protocol performs similar to a mix network of ten to eleven mixes. 

Furthermore, the linear growth of the execution time of our shuffling protocol indicates that the overall effect of the\reviewremove{ quadratic} communication complexity \reviewadd{being quadratic in $n$} is minor.
To further support this assumption, we investigated the ratio of computation time to communication time (see Fig.~\ref{fig:EvaluationComputationPercentage}). For growing $n$, the ratio of our protocol shows logarithmic trend. Besides that, its computation percentage is only a fraction of the computation percentage of the mix networks, which implies a smaller demand for computing power.

\subsection{Summary}

Our protocol has constant round complexity and linear computational complexity. Our empirical performance analysis shows that the execution time is linear in $n$. This implies that the \reviewremove{quadratic} \reviewadd{fact that the} communication complexity of our shuffling protocol \reviewadd{is quadratic in $n$,} only has a minor impact on the overall execution time. In this analysis, shuffling $96$ secret inputs encrypted under a $2048$ bits long Paillier key took $9.69$ seconds, which proves the practicability of our secret shuffling protocol. Performing shuffling via a mix network of five mixes takes roughly half as long. However, such a mix network requires five independent, powerful servers, each of which performs $\mathcal{O}(n)$ cryptographic operations and therefore consumes an amount of energy that is linear in $n$. Furthermore, the much smaller percentage of time required for computation relatively to communication in our shuffling protocol indicates generally lower cloud-computing costs and lower energy consumption.
Therefore, we consider our secret shuffling protocol as a valuable alternative to mix networks.
\section{Conclusion}\label{Conclusion}

We present an efficient secure multi-party protocol for shuffling encrypted data. It precludes any mapping between ciphertexts in the unshuffled and the shuffled sequence with probability better than guessing. We prove correctness of our shuffling functionality and privacy of the confidential inputs. Key element of our contribution is \reviewremove{the} \reviewadd{a} novel\reviewremove{,} \reviewadd{approach to} efficient random index \reviewremove{generation} \reviewadd{distribution}, which provides the random, secret permutation. The shuffling protocol has computational complexity linear in the number of players as well as constant round complexity. It shuffles $96$ ciphertexts in $9.69$ seconds for $2048$ bit long keys. We show that the effect of the communication complexity on the execution time is minor, which ensures good scalability. Our shuffling protocol performs asymptotically better than previous MPC-based shuffling approaches that focus on low communication complexity but suffer from higher computational complexity, which has negative impact on scalability. Furthermore, its execution time is only $2.14$ times that of a mix network of five mixes but requires no additional, independent servers. This not only enables use cases with centralized communication scenarios, but also causes much lower cloud-computing costs. \reviewadd{Being a general-purpose protocol, it can be used in a variety of applications such as privacy-preserving benchmarking systems, anonymous surveys, polls, voting, and many more.}
\section{Future Work}\label{FutureWork}

The protocol's applicability could be further improved by reducing its communication complexity. This can be achieved with a more efficient approach to obtaining the input ciphertexts from the service provider and selecting one of unique, random index. Moreover, it could be modified to be secure against malicious adversaries~\cite{Lin17}. In a more generic version, $m$ encrypted inputs could be present on the service-provider side prior to the protocol execution instead of being provided by the $n$ players. The $n$ players could then shuffle the $m$ values. Further security analysis is necessary to investigate the implications of setting $n \ll m$ where players generate multiple random indices and select and rerandomize multiple ciphertexts at once. This would further decrease the communication complexity and improve scalability of the protocol.

\printbibliography

\end{document}